\documentclass{llncs}
\usepackage{amsmath}
\usepackage{amssymb}
\usepackage{graphicx,xspace,tikz}
\usetikzlibrary{arrows,automata}

\newcommand{\leaveout}[1]{}

\newcommand{\sosrule}[2]{\frac{\raisebox{.7ex}{\normalsize{$#1$}}}
                        {\raisebox{-1.0ex}{\normalsize{$#2$}}}}

\newcommand{\ltsntrans}[1]{\,{\stackrel{{#1}}{\nrightarrow}}\,}
\newcommand{\ltstrans}[1]{\xrightarrow{#1}}
\newcommand{\kstrans}{\to}

\newcommand{\bisim}{\mbox{$\underline{\leftrightarrow}$}}

\newcommand{\dsbbisim}{\mbox{$\underline{\leftrightarrow}_{\mathrm{dsb}}$}}

\newcommand{\stuttering}{\,\approx_{\mathrm{s}}\,}

\newcommand{\lts}{\mathsf{lts}}
\newcommand{\ltsrev}{\lts^{-1}}
\newcommand{\ks}{\mathsf{ks}}
\newcommand{\ksrev}{\ks^{-1}}

\newcommand{\ie}{\emph{i.e.}}
\newcommand{\eg}{\emph{e.g.}}
\newcommand{\viz}{\emph{viz.}}
\newcommand{\etal}{\emph{et al}\xspace}

\title{Folk Theorems on the Correspondence between State-Based and Event-Based Systems}
\author{Michel A. Reniers\inst{1} \and Tim A.C. Willemse\inst{2}
}
\institute{
Department of Mechanical Engineering, Eindhoven University of Technology,
\\ {P.O.~Box~513}, NL-5600~MB~~Eindhoven, The Netherlands
\and
Department of Computer Science, Eindhoven University of Technology,
\\ {P.O.~Box~513}, NL-5600~MB~~Eindhoven, The Netherlands
}

\thispagestyle{plain}
\pagestyle{plain}

\begin{document}
\begin{frontmatter}

\maketitle

\begin{abstract}
Kripke Structures and Labelled Transition Systems are the two most
prominent semantic models used in concurrency theory. Both models
are commonly believed to be equi-expressive. One can find many ad-hoc
embeddings of one of these models into the other. We build upon the
seminal work of De Nicola and Vaandrager that firmly established the
correspondence between stuttering equivalence in Kripke Structures
and divergence-sensitive branching bisimulation in Labelled Transition
Systems. We show that their embeddings can also be used for a range of
other equivalences of interest, such as strong bisimilarity, simulation
equivalence, and trace equivalence. Furthermore, we extend the results by
De Nicola and Vaandrager by showing that there are additional translations
that allow one to use minimisation techniques in one semantic domain to
obtain minimal representatives in the other semantic domain for these
equivalences.

\end{abstract}

\end{frontmatter}

\pagenumbering{arabic}

\newcommand{\KS}{\textsf{KS}\xspace}
\newcommand{\LTS}{\textsf{LTS}\xspace}

\section{Introduction}\label{Sect:intro}

Concurrency theory, and process theory in general, deal with the analysis
and specification of behaviours of reactive systems, \ie, systems that
continuously interact with their environment. Over the course of the
past decades, a rich variety of formal languages have been proposed
for modelling such systems effectively. At the level of the semantics,
however, consensus seems to have been reached over the models used to
represent these behaviours. Two of the most pervasive models are the
state-based model generally referred to as \emph{Kripke Structures}
and the event-based model known as \emph{Labelled Transition Systems},
henceforth referred to as \KS and \LTS.

The common consensus is that both the \KS and \LTS models are on
equal footing. This is supported by several embeddings of one model
into the other that have been studied in the past, see below for a
brief overview of the relevant literature.  As far as we have been
able to trace, in all cases embeddings of both semantic models were
considered modulo a single behavioural equivalence. For instance, in
their seminal work~\cite{DBLP:journals/jacm/NicolaV95}, De Nicola and
Vaandrager showed that there are embeddings in both directions showing
that stuttering equivalence~\cite{DBLP:journals/tcs/BrowneCG88}
in \KS coincides with divergence-sensitive branching
bisimulation~\cite{vanGlabbeek96} in \LTS. The embeddings, however,
look a bit awkward from the viewpoint of concrete equivalence relations.

On the basis of these results, one cannot arrive at the conclusion that
the embeddings also work for a larger set of equivalences. For instance,
it is very easy to come up with a mapping that reflects and preserves
branching-time equivalences while breaking linear-time equivalences, by
exposing observations of branching through the encodings.  Note that it is
equally easy to construct encodings that break branching-time equivalences
while reflecting and preserving some linear-time equivalences, \eg,
by including some form of determinisation in the embeddings.

Our contributions are as follows. Using the
\KS-\LTS embeddings $\lts{}$ and $\ks{}$ of De Nicola and
Vaandrager in~\cite{DBLP:conf/litp/NicolaV90}, in Section \ref{Sect:preservations} we formally establish the
following relations under these embeddings:
\begin{enumerate}
\item bisimilarity in \KS reflects and preserves bisimilarity in \LTS;
\item similarity in \KS reflects and preserves similarity in \LTS;
\item trace equivalence in \KS reflects and preserves completed trace equivalence
in \LTS.
\end{enumerate}
These results add to the credibility that indeed both worlds are on
equal footing, and it may well be that the embeddings $\ks$ and $\lts$
are in fact canonical.

As already noted in~\cite{DBLP:conf/litp/NicolaV90}, there is no
immediate correspondence between the embeddings $\lts{}$ and $\ks{}$.
For instance, one cannot move between \KS and \LTS and back again by
composing $\lts{}$ and $\ks{}$. We mend this situation by introducing two
additional translations, \viz, $\ltsrev$ and $\ksrev$, that can be used
to this end. Moreover, we show that combining these with the original
embeddings enables one to minimise with respect to an equivalence in
\KS by minimising the embedded artefact in \LTS (and \emph{vice versa}).

From a practical point of view, our contributions allow one to smoothly
move between both semantic models using a single set of translations.
This reduces the need for implementing dedicated software in one setting
when one can take advantage of state-of-the-art machinery available in
the other setting.

\paragraph{Related Work}

In their seminal paper (see~\cite{DBLP:journals/jacm/NicolaV95})
on logics for branching bisimilarity, De Nicola and Vaandrager
established, among others, a firm correspondence between the
divergence-sensitive branching bisimilarity of~\cite{vanGlabbeek96},
and stuttering equivalence~\cite{DBLP:journals/tcs/BrowneCG88}.
Their results spawned an interest in the relation
between temporal logics in the \LTS and the \KS setting, see
\eg~\cite{DBLP:journals/cn/NicolaFGR93,DBLP:conf/litp/NicolaV90}. The
latter both contain the embeddings that we use in this paper, differing
slightly from the ones proposed in~\cite{DBLP:journals/jacm/NicolaV95},
which in turn were in part inspired by the (unpublished) embedding
by Emerson and Lei~\cite{EL:84}.  The tight correspondence between
stuttering equivalence and branching bisimilarity that was exposed,
led Groote and Vaandrager to define algorithms for deciding said
equivalences in~\cite{DBLP:conf/icalp/GrooteV90}. Their algorithms
(and their correctness proofs), however, are stated directly in terms of
the appropriate setting, and do not appear to use the embeddings $\lts$
and $\ks$ (but they might have acted as a source of inspiration).

Apart from the few documented cases listed above, many ad-hoc
embeddings are known to work for equivalences that are not sensitive to
abstraction. For instance, one can model the state labelling in a Kripke
Structure by means of labelled self-loops, or directly on the edges to
the next states, thereby exposing the same information. Such embeddings,
however, fail for equivalences that are sensitive to abstraction, such
as stuttering equivalence, which basically compresses sequences of states
labelled with the same state information.

\paragraph{Outline} In Section~\ref{Sect:preliminaries}, we formally
introduce the computational models \KS and \LTS, along with the
embeddings $\ks$ and $\lts$. The latter are proved to preserve and
reflect the additional three pairs of equivalences relations stated
above. In Section~\ref{sec:minimisations}, we introduce the inverses
$\ksrev$ and $\ltsrev$, and we show that these can be combined with
$\ks$ and $\lts$, respectively, to obtain our minimisation results.
We finish with a brief summary of our contributions and an outlook to
some interesting open issues.

\section{Preliminaries}\label{Sect:preliminaries}

Central in both models of computation that we consider, \ie, \KS and
\LTS, are the notions of \emph{states} and \emph{transitions}. While the
\KS model emphasises the information contained \emph{in} such states,
the \LTS model emphasises the \emph{state changes} through some action
modelling a real-life event. Let us first recall both models of computation.

\newcommand{\AP}{\ensuremath{AP\xspace}}
\newcommand{\tuple}[1]{\ensuremath{\langle\,{#1}\,\rangle}}
\begin{definition}
A \emph{Kripke Structure} is a structure $\tuple{S, \AP, \to, L}$, where
\begin{itemize}
\item $S$ is a set of states;

\item $\AP$ is a set of atomic propositions;

\item $\to \subseteq S \times S$ is a total transition relation, \ie, for
all $s \in S$, there exists $t \in S$, such that $(s,t) \in\to$;

\item $L : S \to 2^{\AP}$ is a state labelling.
\end{itemize}
\end{definition}
By convention, we write $s \to t$ whenever $(s,t) \in \to$.

\begin{remark}
The transition relation in the \KS model is traditionally required to be
total. Our results do not depend on the requirement of totality, but
we choose to enforce totality in favour of a smoother presentation and
more concise definitions. Without totality,
slightly more complicated treatments of the notions of paths and traces
(see also Section~\ref{sec:traces}) are needed.
\end{remark}
With the above restriction in mind, we define the \LTS model with
a similar restriction imposed on it.

\newcommand{\act}{\ensuremath{A}ct}
\begin{definition}[Labelled Transition System]
A structure $\tuple{S,\act,\ltstrans{}}$ is an \LTS, where:
\begin{itemize}
\item $S$ is a set of states;
\item $\act$ is a set of actions;
\item $\ltstrans{} \subseteq S \times (\act \cup \{\tau\}) \times S$ is a total
transition relation,
\ie, for all $s \in S$, there are $a \in \act$, $t \in S$, such that
$(s,a,t) \in \ltstrans{}$.
\end{itemize}
\end{definition}
In lieu of the convention for \KS, we write $s \ltstrans{a} t$ whenever
$(s,a,t) \in \ltstrans{}$.

Note that in the setting of the \LTS model, a special constant $\tau$
is assumed outside the alphabet of the set of actions $\act$ of any
concrete transition system. This constant is used to represent so-called
silent steps in the transition system, modelling events that are
unobservable to any witness of the system.

In~\cite{DBLP:conf/litp/NicolaV90}, De Nicola and Vaandrager considered
embeddings called $\lts$ and $\ks$, which allowed one to move from \KS
models to \LTS models, and, \emph{vice versa}, from \LTS models to \KS
models. We repeat these embeddings below, starting with the embedding
from \KS into \LTS.

\newcommand{\shadow}[1]{\ensuremath{\bar{#1}}}
\begin{definition}
\label{translationlts}
The embedding $\lts : \KS \to \LTS$ is defined as $\lts(K) =
\tuple{S', \act, \ltstrans{}}$ for arbitrary Kripke Structures $K =
\tuple{S,\AP,\kstrans{},L}$, where:

\begin{itemize}
\item $S' = S \cup \{ \shadow{s} ~|~ s \in S \}$, where
it is assumed that $\shadow{s} \notin S$ for all $s \in S$;
\item $\act = 2^{\AP} \cup \{\bot\}$;
\item $\ltstrans{}$ is the smallest relation satisfying:
$$
\begin{array}{cp{1cm}c}
\sosrule{}{s \ltstrans{\bot} \shadow{s}} & &
\sosrule{s \kstrans{} t \qquad L(s) = L(t)}{s \ltstrans{\tau} t} \\
\\
\sosrule{}{\shadow{s} \ltstrans{L(s)} s} & &
\sosrule{s \kstrans{} t \qquad L(s) \not= L(t)}{s \ltstrans{L(t)} t} \\
\end{array}
$$
\end{itemize}
\end{definition}
The fresh symbol $\bot$ is used to signal a forthcoming encoding of the state information of the Kripke Structure. Encoding the state information by means of a self-loop $s \ltstrans{L(s)} s$ introduces problems in preserving and reflecting equivalences that are sensitive to abstraction.

\begin{definition}
\label{translationks}
The embedding $\ks : \LTS \to \KS$ is formally defined as
$\ks(T) = \tuple{S', \AP, \kstrans{}, L}$ for Labelled Transition
System $T = \tuple{S, \act, \ltstrans{}}$, where:
\begin{itemize}
\item $S' = S \cup \{ (s,a,t) \in \ltstrans{} ~|~ a \not=\tau \}$;
\item $\AP = \act \cup \{\bot\}$, where $\bot \notin \act$;
\item $\kstrans$ is the least relation satisfying:
$$
\begin{array}{cp{1cm}cp{1cm}c}
\sosrule{}{s \kstrans (s,a,t)}
&
&
\sosrule{}{(s,a,t) \kstrans t}
&
&
\sosrule{s \ltstrans{\tau} t}{s \kstrans t}
\end{array}
$$
\item $L(s) = \{\bot\}$ for $s \in S$, and $L((s,a,t)) = \{a\}$.
\end{itemize}
\end{definition}
In this embedding the fresh symbol $\bot$ is used to label the states from the Labelled Transition System.
The reason to treat $\tau$-transitions different from ordinary actions is that otherwise equivalences that abstract from sequences of $\tau$-transitions are not reflected well.

Observe that, as already stated in~\cite{DBLP:conf/litp/NicolaV90},
due to the artefacts introduced by the embeddings, moving from \LTS
to \KS and back again yields transition systems incomparable to the
original ones. Consequently, in \LTS, one cannot take advantage of tools
for minimising in the setting of \KS, and \emph{vice versa}. We defer
further discussions on this matter to Section~\ref{sec:minimisations}.

\section{Preservations and Reflections of Equivalences Under $\lts$ and $\ks$}
\label{Sect:preservations}

The embeddings $\lts$ and $\ks$ have already been shown to preserve and
reflect stuttering equivalence~\cite{DBLP:journals/tcs/BrowneCG88} and divergence-sensitive
branching bisimulation~\cite{vanGlabbeek96} by De Nicola and Vaandrager. In this
section, we introduce three additional pairs of equivalences and show
that these are also preserved by the embeddings $\lts$ and $\ks$. Our
choice for these four equivalences is motivated largely by the limited
set of equivalence's available in the \KS model (contrary to the \LTS model,
which offers a very fine-grained lattice of equivalence relations).

\begin{remark}
For reasons of brevity, throughout this paper we define equivalence
relations on states within a single \LTS (resp.\ \KS) rather than
equivalence relations between different models in \LTS (resp.\ \KS). Note
that this does not incur a loss in generality, as it is easy to define
the latter in terms of the former.

\end{remark}

\subsection{Similarity}
\label{sec:similarity}

Both \KS and \LTS have well-developed theories revolving around
similarity. We first formally define both notions.

\newcommand{\simm}{\simeq}
\begin{definition}
Let $K = \tuple{S,\AP,\kstrans,L}$ be a Kripke Structure. A relation
$B \subseteq S \times S$ is a \emph{simulation relation} iff for every
$s,s' \in S$ satisfying $(s,s') \in B$:
\begin{itemize}
\item $L(s) = L(s')$;
\item for all $t \in S$, if $s \kstrans t$, then $s' \kstrans t'$ for
some $t' \in S$ such that $(t,t') \in B$.

\end{itemize}
For states $s,s' \in S$, we say $s$ is simulated by $s'$ if there is
a simulation relation $B$, such that $(s,s') \in B$. States $s,s' \in
S$ are said to be \emph{similar}, denoted $K \models s \simm s'$ iff
there are simulation relations $B$ and $B'$, such that $(s,s') \in B$
and $(s',s) \in B'$.

\end{definition}
\begin{remark}
It should be noted that when lifting our notion of similarity to
an equivalence relation between different models in \KS,
the first requirement is sometimes stated as $L(s) = L'(s') \cap \AP$,
where $L'$ is the state labelling of the second \KS model, and $\AP$
is the set of atomic propositions of the first \KS model. In this
case, some form of abstraction is included already, and care should be
taken to deal with such abstractions properly when lifting all our results to
such a setting.
\end{remark}

\begin{definition} Let $T = \tuple{S, \act, \ltstrans{}}$ be a Labelled
Transition System. A relation $B \subseteq S \times S$ is a
\emph{simulation relation} iff for every $s,s' \in S$ satisfying
$(s,s') \in B$:
\begin{itemize}
\item for all $t \in S$ and $a \in \act \cup \{\tau\}$,
if $s \ltstrans{a} t$, then $s' \ltstrans{a} t'$
for some $t' \in S'$ such that $(t,t') \in B$.

\end{itemize}
State $s \in S$ is said to be simulated by state $s' \in S$ if there is
a simulation relation $B$, such that $(s,s') \in B$. States $s,s' \in S$
are \emph{similar}, denoted
$T \models s \simm s'$ iff there are simulation relations $B$ and $B'$, such
that $(s,s') \in B$ and $(s',s) \in B'$.

\end{definition}

The theorems below state that indeed, embedding $\lts$ preserves
and reflects \KS-similarity through \LTS-similarity (see
Theorem~\ref{th:lts_similarity}), and \emph{vice versa}, embedding
$\ks$ preserves and reflects \LTS-similarity through \KS-similarity
(Theorem~\ref{th:ks_similarity}).

\begin{theorem}
\label{th:lts_similarity}
Let $K = \tuple{S, \AP,\kstrans, L}$ be an arbitrary Kripke Structure.
Then, for all $s, s' \in S$, we have
$K \models s \simm s'$ iff $\lts(K) \models s \simm s'$.
\end{theorem}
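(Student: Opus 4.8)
The plan is to prove both directions by explicitly constructing simulation relations on $\lts(K)$ from simulation relations on $K$, and conversely. Recall that $\lts(K)$ has state space $S \cup \{\shadow{s} \mid s \in S\}$, and that the only transitions are $s \ltstrans{\bot} \shadow{s}$, $\shadow{s} \ltstrans{L(s)} s$, $s \ltstrans{\tau} t$ when $s \kstrans t$ and $L(s)=L(t)$, and $s \ltstrans{L(t)} t$ when $s \kstrans t$ and $L(s) \neq L(t)$. The key observation is that the $\bot$-transitions and the $\shadow{s}$-states are completely rigid: from any $s$ the unique $\bot$-successor is $\shadow{s}$, and from $\shadow{s}$ the unique action is $L(s)$, leading back to $s$. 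So to simulate a $\bot$-step of $s$ by $s'$, the simulating state $s'$ \emph{must} take its own $\bot$-step to $\shadow{s'}$, and then $\shadow{s'}$ can match $\shadow{s} \ltstrans{L(s)} s$ only if $L(s)=L(s')$ — this is exactly where the label-equality condition of \KS-simulation gets forced out of \LTS-simulation, which is the conceptual heart of the reflection direction.

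For the ``only if'' direction ($K \models s \simm s'$ implies $\lts(K) \models s \simm s'$), I would take a \KS-simulation $B \subseteq S \times S$ and define $\widehat{B} = B \cup \{(\shadow{s},\shadow{s'}) \mid (s,s') \in B\}$ on $\lts(K)$, then check it is an \LTS-simulation by case analysis on the possible outgoing transitions. For a pair $(s,s') \in B$: a $\bot$-step $s \ltstrans{\bot} \shadow{s}$ is matched by $s' \ltstrans{\bot} \shadow{s'}$ with $(\shadow{s},\shadow{s'}) \in \widehat{B}$; a $\tau$-step $s \ltstrans{\tau} t$ comes from $s \kstrans t$ with $L(s)=L(t)$, so by $B$ there is $s' \kstrans t'$ with $(t,t') \in B$, and since $L(s)=L(s')$ (from $B$) and $L(s)=L(t)$, we get $L(s')=L(t')$ iff $L(t)=L(t')$; here one must be slightly careful, because $B$ does not guarantee $L(t)=L(t')$ in general — but $B$ being a simulation does give $L(t)=L(t')$ (that is precisely the first clause of \KS-simulation applied to $(t,t')$), so indeed $s' \ltstrans{\tau} t'$. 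Similarly a non-$\tau$ step $s \ltstrans{L(t)} t$ from $s \kstrans t$ with $L(s) \neq L(t)$ is matched using $s' \kstrans t'$, $(t,t') \in B$, $L(t)=L(t')$, and $L(s')=L(s) \neq L(t)=L(t')$, so $s' \ltstrans{L(t')} t' = s' \ltstrans{L(t)} t'$. For a pair $(\shadow{s},\shadow{s'})$: the only transition is $\shadow{s} \ltstrans{L(s)} s$, matched by $\shadow{s'} \ltstrans{L(s')} s'$; this requires $L(s)=L(s')$, which holds because $(s,s')\in B$. Applying this to both $B$ and $B'$ gives \LTS-similarity.

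For the ``if'' direction ($\lts(K) \models s \simm s'$ implies $K \models s \simm s'$), I would take an \LTS-simulation $R$ on $\lts(K)$ with $(s,s') \in R$ and define $B = R \cap (S \times S)$, the restriction to genuine \KS-states. I claim $B$ is a \KS-simulation. First, for $(s,s') \in B$ we need $L(s)=L(s')$: since $s \ltstrans{\bot} \shadow{s}$, $R$ forces $s' \ltstrans{\bot} u$ with $(\shadow{s},u) \in R$; the only $\bot$-successor of $s'$ is $\shadow{s'}$, so $u=\shadow{s'}$; then from $\shadow{s} \ltstrans{L(s)} s$, $R$ forces $\shadow{s'}$ to do an $L(s)$-action, but the only action available at $\shadow{s'}$ is $L(s')$, hence $L(s)=L(s')$. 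Second, for a \KS-transition $s \kstrans t$: if $L(s)=L(t)$ then $s \ltstrans{\tau} t$ in $\lts(K)$, so $R$ gives $s' \ltstrans{\tau} t'$ with $(t,t') \in R$; a $\tau$-step of $s'$ must be of the form $s' \kstrans t'$ with $L(s')=L(t')$, and $t' \in S$, so $(t,t') \in B$ and $s' \kstrans t'$ as required. If $L(s) \neq L(t)$ then $s \ltstrans{L(t)} t$; note $L(t) \neq \bot$ and $L(t)$ is a ``proper'' action label, so $R$ gives $s' \ltstrans{L(t)} t'$ with $(t,t') \in R$; the transitions of $s'$ with a label in $2^{\AP}$ (other than via $\shadow{\cdot}$, but $s'$ is not a shadow state) are exactly $s' \ltstrans{L(v)} v$ for $s' \kstrans v$ with $L(s') \neq L(v)$, so $t' = v \in S$, $s' \kstrans t'$, and $(t,t') \in B$. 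Thus $B$ is a \KS-simulation; symmetrically restricting the simulation for $(s',s)$ yields the other direction, and $K \models s \simm s'$ follows.

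The main obstacle, and the point deserving the most care, is the bookkeeping in the ``only if'' direction showing that $\tau$- and non-$\tau$-transitions of $\lts(K)$ are matched \emph{with the correct label}: one must repeatedly invoke the fact that a \KS-simulation relates only states with equal labels, applied not to the source pair but to the target pair, to conclude $L(t)=L(t')$ and hence that the label $L(t)$ chosen in $\lts(K)$ for the step out of $s$ coincides with the label $L(t')$ chosen for the step out of $s'$. In the ``if'' direction the subtlety is instead that one must argue the simulating state is never ``stuck'' in a shadow state — which is immediate since $S$ and $\{\shadow{s}\}$ are separated by the reachability structure, but it is what makes the restriction $R \cap (S \times S)$ well-behaved. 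Everything else is a routine transition-by-transition verification.
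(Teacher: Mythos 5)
Your proposal is correct, and for the preservation direction it is essentially the paper's own proof: the same relation $B \cup \{ (\shadow{s},\shadow{s}') \mid (s,s') \in B \}$ and the same case analysis on the $\bot$-, $\tau$- and $L(t)$-labelled transitions, including the key use of the label-equality clause on the \emph{target} pair $(t,t')$ to get the labels to match. You go beyond the appendix proof in that you also spell out the reflection direction (from $\lts(K) \models s \simm s'$ back to $K \models s \simm s'$) by restricting an \LTS-simulation to $S \times S$ and recovering $L(s)=L(s')$ from the forced $\bot$-step to $\shadow{s}'$ followed by its unique $L(s')$-labelled transition; the paper states the theorem as an ``iff'' but its written proof only treats preservation, so this extra argument is a welcome completion rather than a deviation, and your identification of the possible shapes of the matching transitions of $s'$ (only $\tau$-steps from equal-label $\kstrans$-successors, only $2^{\AP}$-labelled steps from unequal-label ones) is exactly what is needed to make it go through.
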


\begin{proof}
See Appendix \ref{pf:th:lts_similarity}.\qed
\end{proof}

\begin{theorem}
\label{th:ks_similarity}
Let $T = \tuple{S, \act, \ltstrans{}}$ be a Labelled Transition System.
Then for all $s,s' \in S$, we have $T \models s \simm s'$ iff
$\ks(T) \models s \simm s'$.
\end{theorem}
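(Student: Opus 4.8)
The plan is to mirror the structure of the (presumed) proof of Theorem~\ref{th:lts_similarity}, establishing both directions by explicitly constructing simulation relations in one domain from simulation relations in the other. Fix an \LTS $T = \tuple{S,\act,\ltstrans{}}$ and let $\ks(T) = \tuple{S',\AP,\kstrans,L}$ with $S' = S \cup \{(s,a,t) \in \ltstrans{} \mid a \neq \tau\}$ as in Definition~\ref{translationks}. The key observation driving everything is that in $\ks(T)$ the original states carry label $\{\bot\}$ and the ``edge states'' $(s,a,t)$ carry label $\{a\}$, and that the only moves available are $s \kstrans (s,a,t)$, $(s,a,t) \kstrans t$, and $s \kstrans t$ whenever $s \ltstrans{\tau} t$. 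So a step $s \ltstrans{a} t$ with $a \neq \tau$ is simulated in $\ks(T)$ by the two-step path $s \kstrans (s,a,t) \kstrans t$, and a step $s \ltstrans{\tau} t$ corresponds to the single step $s \kstrans t$.

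First I would prove the ``only if'' direction: from simulation relations $B, B'$ witnessing $T \models s \simm s'$, build a simulation relation $\widehat{B}$ on $S'$. The natural candidate is $\widehat{B} = B \cup \{((s,a,t),(s',a,t')) \mid (s,s') \in B,\ (t,t') \in B\}$ — that is, relate original states as before, and relate edge states labelled with the same action whose endpoints are related. One checks: (i) label equality holds by construction (both $\{\bot\}$ on original states, both $\{a\}$ on the paired edge states); (ii) a move $s \kstrans (s,a,t)$ from a $B$-related pair $(s,s')$ is matched because $s \ltstrans{a} t$ forces $s' \ltstrans{a} t'$ with $(t,t') \in B$, hence $s' \kstrans (s',a,t')$ and $((s,a,t),(s',a,t')) \in \widehat{B}$; (iii) a move $s \kstrans t$ coming from $s \ltstrans{\tau} t$ is matched similarly via $s' \ltstrans{\tau} t'$; (iv) a move $(s,a,t) \kstrans t$ is matched by $(s',a,t') \kstrans t'$ since $(t,t') \in \widehat{B}$. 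Symmetrically $\widehat{B'}$ from $B'$ works, giving $\ks(T) \models s \simm s'$.

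For the ``if'' direction, start from a simulation relation $\widehat{B}$ on $S'$ with $(s,s') \in \widehat{B}$ and extract $B = \widehat{B} \cap (S \times S)$. I would show $B$ is a simulation relation on $T$. The subtle point is matching an \LTS step $s \ltstrans{a} t$ with $a \neq \tau$: in $\ks(T)$ this gives $s \kstrans (s,a,t)$, which $\widehat{B}$ must match by some $s' \kstrans u$ with $((s,a,t),u) \in \widehat{B}$. Here is where the label information does the work: since $L((s,a,t)) = \{a\}$ and $\widehat{B}$ requires equal labels, $u$ cannot be an original state (those have label $\{\bot\} \neq \{a\}$) and cannot be an edge state with a different action; so $u = (s',a,t')$ for some $t'$ with $s' \ltstrans{a} t'$. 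Then from $(s,a,t) \kstrans t$ and the simulation property applied to the pair $((s,a,t),(s',a,t'))$, we get $(s',a,t') \kstrans v$ with $(t,v) \in \widehat{B}$; again by label equality ($L(t) = \{\bot\}$) the only successor of $(s',a,t')$ is $t'$, so $v = t'$ and $(t,t') \in B$. The $\tau$-case is easier: $s \ltstrans{\tau} t$ gives $s \kstrans t$, matched by $s' \kstrans v$ with $(t,v) \in \widehat{B}$; since $L(t) = \{\bot\}$, $v$ is an original state, so the matching move is itself a $\tau$-step $s' \ltstrans{\tau} v$ (the only way to reach an original state from an original state in $\ks(T)$) and $(t,v) \in B$. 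Doing this for $\widehat{B}$ and $\widehat{B'}$ yields $T \models s \simm s'$.

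The main obstacle I anticipate is the bookkeeping in the ``if'' direction around which successor a related edge/original state can have: one must repeatedly invoke the label-equality clause of simulation in $\ks(T)$ to rule out ``wrong-typed'' successors, and care is needed because a simulating state $s'$ could a priori also reach original states via $\tau$-steps — the argument must pin down that when the simulated move leads into an $\{a\}$-labelled edge state, the matching move must too. Everything else is a routine case analysis following the three SOS rules of Definition~\ref{translationks}. \qed
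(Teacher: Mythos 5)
Your proposal is correct, and for the direction the paper actually writes out (from $T \models s \simm s'$ to $\ks(T) \models s \simm s'$) it is essentially the paper's own argument: the same relation $B \cup \{((s,a,t),(s',a,t')) \mid (s,s') \in B \wedge (t,t') \in B\}$, checked by the same case analysis over the three rules of Definition~\ref{translationks}. The paper's appendix only spells out that direction; your converse, which projects $\widehat{B}$ to $\widehat{B} \cap (S \times S)$ and uses the labels $\{\bot\}$ versus $\{a\}$ to force any matching $\kstrans$-move onto a successor of the right kind (edge state $(s',a,t')$, respectively a $\tau$-successor in $S$), is sound and supplies the half that the paper leaves implicit.
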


\begin{proof}
See Appendix \ref{pf:th:ks_similarity}.\qed
\end{proof}

\subsection{Bisimilarity}
\label{sec:bisimilarity}

A slightly stronger notion of equivalence that is rooted in the same concepts
as similarity, is \emph{bisimilarity}. Again, bisimilarity has been defined
in both \KS and \LTS, and we here show that both definitions agree through
the embeddings $\lts$ and $\ks$.

\begin{definition}
Let $K = \tuple{S, \AP, \kstrans, L}$ be a Kripke Structure. States
$s,s' \in S$ are said to be \emph{bisimilar}, denoted
$K \models s \bisim s'$ iff there is a \emph{symmetric
simulation relation} $B$, such that $(s,s') \in B$.

\end{definition}
Similarly, we define bisimilarity in the setting of \LTS as follows:

\begin{definition}
Let $T = \tuple{S, \act, \ltstrans{}}$ be a Labelled Transition System.
States $s,s' \in S$ are \emph{bisimilar}, written $T \models s \bisim s'$ iff
there is a \emph{symmetric simulation relation} $B$, such that
$(s,s') \in B$.

\end{definition}

\begin{theorem}
\label{th:lts_bisimilarity}
Let $K = \tuple{S, \AP, \kstrans, L}$ be a Kripke Structure. Then for all
$s,s' \in S$, we have $K \models s \bisim s'$ iff $\lts(K) \models s \bisim s'$.
\end{theorem}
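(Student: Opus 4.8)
The plan is to prove both directions by transporting bisimulation relations across the embedding, exploiting the fact that the transitions of $\lts(K)$ faithfully ``shadow'' those of $K$.

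\textbf{From $K$ to $\lts(K)$.} Given a symmetric simulation $B$ on $K$ with $(s,s') \in B$, I would take $B' = B \cup \{(\shadow{u},\shadow{v}) \mid (u,v) \in B\}$ and verify that $B'$ is a symmetric simulation on $\lts(K)$. Symmetry is immediate. For a pair $(u,v) \in B$ with $u,v \in S$ there are, by the SOS rules, three kinds of outgoing transition of $u$: $u \ltstrans{\bot} \shadow{u}$, matched by $v \ltstrans{\bot} \shadow{v}$; $u \ltstrans{\tau} t$ coming from $u \kstrans t$ with $L(u) = L(t)$; and $u \ltstrans{L(t)} t$ coming from $u \kstrans t$ with $L(u) \neq L(t)$. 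In the last two cases, since $B$ is a simulation there is $t'$ with $v \kstrans t'$ and $(t,t') \in B$; because $B$ relates only states with equal labels, $L(u) = L(v)$ and $L(t) = L(t')$, so the label of $v \kstrans t'$ in $\lts(K)$ equals that of $u \kstrans t$, and $(t,t') \in B \subseteq B'$. For a pair $(\shadow{u},\shadow{v}) \in B'$, the unique outgoing transition $\shadow{u} \ltstrans{L(u)} u$ is matched by $\shadow{v} \ltstrans{L(v)} v$ (again using $L(u) = L(v)$), and $(u,v) \in B \subseteq B'$.

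\textbf{From $\lts(K)$ to $K$.} Given a symmetric simulation $B'$ on $\lts(K)$ with $(s,s') \in B'$ and $s,s' \in S$, I would take $B = B' \cap (S \times S)$; this is symmetric and contains $(s,s')$. Two things must be checked for $(u,v) \in B$. First, $L(u) = L(v)$: from $u \ltstrans{\bot} \shadow{u}$ and the fact that the only $\bot$-labelled transition out of $v \in S$ is $v \ltstrans{\bot} \shadow{v}$, we get $(\shadow{u},\shadow{v}) \in B'$; then from $\shadow{u} \ltstrans{L(u)} u$ and the fact that the only transition out of $\shadow{v}$ is $\shadow{v} \ltstrans{L(v)} v$, we get $L(u) = L(v)$. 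Second, the transition clause: suppose $u \kstrans t$. If $L(u) = L(t)$ then $u \ltstrans{\tau} t$, and a matching transition $v \ltstrans{\tau} w$ can only arise from the sole rule producing $\tau$-steps, so $v \kstrans w$ with $w \in S$, hence $(t,w) \in B' \cap (S \times S) = B$. If $L(u) \neq L(t)$ then $u \ltstrans{L(t)} t$ with $L(t) \in 2^{\AP}$; a matching transition $v \ltstrans{L(t)} w$ cannot be produced by the rule $\shadow{p} \ltstrans{L(p)} p$ since $v \in S$, so it comes from $v \kstrans w$ with $w \in S$, and again $(t,w) \in B$.

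\textbf{Main obstacle.} The delicate direction is the second. The crux is that a transition of $\lts(K)$ chosen to simulate a move between two original states must itself land back in $S$ and correspond to a genuine $K$-transition; this relies on the case split on whether $L(u) = L(t)$, so that the $\tau$-labelled and the $2^{\AP}$-labelled transitions are analysed separately, together with the observation that shadow states are reachable only via $\bot$-transitions and that $\bot$-transitions never match a $\tau$- or $2^{\AP}$-labelled move. Establishing $L(u) = L(v)$ for related original states is likewise not immediate and requires the two-step detour through the shadow states rather than following directly from $(u,v) \in B'$.
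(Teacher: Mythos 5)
Your proposal is correct, and its first direction coincides with the paper's own argument: the paper likewise lifts a symmetric simulation $B$ on $K$ to $B' = B \cup \{(\shadow{u},\shadow{v}) \mid (u,v) \in B\}$ and appeals to the transfer-condition checks already done for similarity (the $\bot$-, $\tau$- and $2^{\AP}$-labelled cases you verify are exactly those in the appendix proof of the similarity theorem). Where you go beyond the paper is the reverse direction: the appendix proof of this theorem (and of the similarity theorem it builds on) only treats the passage from $K$ to $\lts(K)$ and leaves the converse implicit. Your restriction $B = B' \cap (S \times S)$, the observation that $\tau$- and $2^{\AP}$-labelled transitions out of original states can only be matched by transitions that again land in $S$ and stem from genuine $\kstrans$-steps, and the two-step detour through $\shadow{u},\shadow{v}$ to recover $L(u) = L(v)$, are all sound; they essentially mirror the reasoning the paper deploys later in the proof of Lemma~\ref{lem:lts_bisim} (where $B \cap (S \times S)$ is shown to be a \KS-bisimulation using the same $\bot$-transition and shadow-state arguments). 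So your write-up is, if anything, more complete than the paper's, at the cost of redoing in-line the case analysis the paper delegates to the similarity proof.
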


\begin{proof}
The proof is along the lines of the proof for similarity. For details,
see Appendix \ref{pf:th:lts_bisimilarity}.\qed
\end{proof}

\begin{theorem}
Let $T = \tuple{S, \act, \ltstrans{}}$ be a Labelled Transition System.
For all $s, s' \in S$, we have $T \models s \bisim s'$ iff $
\ks(T) \models s \bisim s'$.
\end{theorem}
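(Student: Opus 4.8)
The plan is to follow the template used for Theorem~\ref{th:ks_similarity}, additionally carrying symmetry of the witnessing relations through the construction, exactly as bisimilarity strengthens similarity; the proof of Theorem~\ref{th:lts_bisimilarity} was obtained from its similarity counterpart in the same way.

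For the implication from left to right, I would start from a symmetric simulation $B \subseteq S \times S$ on $T$ with $(s,s') \in B$ and lift it to the relation on $\ks(T)$ given by
$$
B' \;=\; B \;\cup\; \{\, ((s_1,a,t_1),(s_1',a,t_1')) \;\mid\; (s_1,s_1') \in B,\ (t_1,t_1') \in B,\ a \neq \tau,\ (s_1,a,t_1),(s_1',a,t_1') \in \ltstrans{} \,\}.
$$
This $B'$ is symmetric because $B$ is, and it contains $(s,s')$. To see it is a simulation on $\ks(T)$ I would distinguish the two kinds of $B'$-pairs. For $(s_1,s_1') \in B \cap (S \times S)$ we have $L(s_1) = \{\bot\} = L(s_1')$; a transition $s_1 \kstrans (s_1,a,t_1)$ originates from $s_1 \ltstrans{a} t_1$ with $a \neq \tau$, which $B$ matches by $s_1' \ltstrans{a} t_1'$ with $(t_1,t_1') \in B$, hence by $s_1' \kstrans (s_1',a,t_1')$, landing in the added part of $B'$; a transition $s_1 \kstrans t_1$ originates from $s_1 \ltstrans{\tau} t_1$, matched by the corresponding $\tau$-step of $s_1'$ and landing back in $B$. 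For an added pair $((s_1,a,t_1),(s_1',a,t_1'))$ both states carry label $\{a\}$, and the only outgoing transitions $(s_1,a,t_1) \kstrans t_1$ and $(s_1',a,t_1') \kstrans t_1'$ match since $(t_1,t_1') \in B$. Hence $\ks(T) \models s \bisim s'$.

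For the converse, I would take a symmetric simulation $B'$ on $\ks(T)$ with $(s,s') \in B'$ and set $B = B' \cap (S \times S)$, which is symmetric and contains $(s,s')$. The crucial observation is that the labelling of $\ks(T)$ pins down the type of $B'$-related states: since $L$ assigns $\{\bot\}$ to states of $S$ and $\{a\}$ with $a \in \act$ (so $a \neq \bot$) to a triple $(u,a,v)$, a state of $S$ can only be $B'$-related to a state of $S$, and a triple only to a triple carrying the same action. With this in hand: a step $s_1 \ltstrans{\tau} t_1$ yields $s_1 \kstrans t_1$ with $L(t_1) = \{\bot\}$, so the matching move of $s_1'$ ends in $S$ and must arise from some $s_1' \ltstrans{\tau} t_1'$ with $(t_1,t_1') \in B$; and a step $s_1 \ltstrans{a} t_1$ with $a \neq \tau$ yields $s_1 \kstrans (s_1,a,t_1)$ with label $\{a\}$, so the matching move of $s_1'$ must be $s_1' \kstrans (s_1',a,t_1')$, forcing $s_1' \ltstrans{a} t_1'$; pushing the unique step $(s_1,a,t_1) \kstrans t_1$ through $B'$ against $(s_1',a,t_1')$, whose sole outgoing transition is to $t_1'$, then gives $(t_1,t_1') \in B' \cap (S \times S) = B$. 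Thus $B$ is a simulation on $T$ and $T \models s \bisim s'$.

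The only real work is the bookkeeping in the second direction: one must argue, purely from the shape of $L$ and the three SOS rules defining $\kstrans$, that each $\kstrans$-transition arises from the expected kind of \LTS-transition, and that the matching move in $\ks(T)$ cannot be one of the ``spurious'' steps routed through a triple. Once that case analysis on the structure of $\ks(T)$ is in place the remainder is routine, and, as for the earlier results, the totality requirement on \KS plays no role.
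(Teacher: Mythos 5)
Your proposal is correct and takes essentially the paper's route: the paper's proof of this theorem is literally ``along the lines of the proof for similarity'', i.e.\ lift the symmetric simulation $B$ to exactly the relation $B'$ on $\ks(T)$ used in the proof of Theorem~\ref{th:ks_similarity}, noting that the lifting preserves symmetry, which is precisely your forward direction. You additionally write out the reflection direction via $B' \cap (S \times S)$ together with the label/shape analysis of $\ks(T)$ (states of $S$ only relate to states of $S$, triples only to triples with the same action); the paper's appendix leaves that direction implicit even for similarity, and your argument for it is sound.
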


\begin{proof} Again, the proof is along the lines of the proof for
similarity. \qed
\end{proof}

\subsection{Stuttering Equivalence -- Divergence-Sensitive Branching
Bisimilarity}
\label{sec:stuttering}

In this section, we merely repeat the definitions for stuttering
equivalence and divergence-sensitive branching bisimilarity. In
Section~\ref{sec:minimisations}, we come back to these equivalence
relations and state several new results for these.\\

The following definition for stuttering equivalence is taken
from \cite{DBLP:journals/jacm/NicolaV95}, where it is shown to coincide
with the original definition by Brown, Clarke and Grumberg~\cite{DBLP:journals/tcs/BrowneCG88}.
We prefer the former phrasing because of its coinductive nature.
\begin{definition}
Let $K = \tuple{S, \AP, \kstrans, L}$ be a Kripke Structure. A
\emph{symmetric} relation $B \subseteq S \times S$ is a
\emph{divergence-blind stuttering equivalence} iff for all $(s,s') \in B$:
\begin{itemize}
\item $L(s) = L(s')$;
\item for all $t \in S$, if $s \kstrans t$, then there exist
$s_0',\ldots, s_n' \in S$, such that $s' = s_0'$ and $(t,s'_n) \in B$, and
for all $i < n$, $s_i' \kstrans s_{i+1}'$ and $(s,s_i') \in B$.
\end{itemize}
\end{definition}
\begin{definition} Let $K = \tuple{S, \AP, \kstrans, L}$ be a Kripke
Structure. Let the Kripke Structure $K_d = \tuple{S_d, \AP_d,
\kstrans_d, L_d}$ be defined as follows:
\begin{itemize}
\item $S_d = S \cup \{s_d\}$ for some fresh state $s_d \notin S$;

\item $\AP_d = \AP \cup \{d\}$ for some fresh proposition $d \notin \AP$;

\item $\kstrans_d = \kstrans \cup \{(s,s_d) ~|~ \text{$s$ is on an
infinite path of states labelled $L(s)$, or $s=s_d$}\}$;

\item for all $s \in S$, $L_d(s) = L(s)$, and $L_d(s_d) = \{d\}$.

\end{itemize}
States $s, s' \in S$ are said to be \emph{stuttering equivalent},
notation: $K \models s \stuttering s'$ iff there is a divergence-blind
stuttering equivalence relation $B$ on $S_d$ of $K_d$, such that
$(s,s') \in B$.
\end{definition}
The origins of divergence-sensitive branching bisimilarity can be
traced back to~\cite{vanGlabbeek96}.
In~\cite{DBLP:journals/fuin/GlabbeekLT09}, Van Glabbeek
\etal demonstrate that various incomparable phrasings of
the divergence property all coincide with the original definition.
For our purposes the following formulation is most suitable.
\newcommand{\nat}{\ensuremath{\mathbb{N}}}
\begin{definition}
Let $T = \tuple{S, \act, \ltstrans{}}$ be a Labelled Transition System.
A symmetric relation $B \subseteq S \times S'$ is a \emph{divergence-sensitive
branching simulation relation} iff for all $(s,s') \in B$:
\begin{itemize}

\item if there is an infinite sequence of states $s_0\ s_1\ s_2 \cdots$
such that $s = s_0$ and $s_i \ltstrans{\tau} s_{i+1}$ for all $i$, then
there exist a mapping $\sigma : \nat \to \nat$, and an infinite sequence of
states $s'_0\ s'_1\ s'_2 \cdots$ such that $s' = s'_0$, $s'_k \ltstrans{\tau}
s'_{k+1}$ and $(s_{\sigma(k)}, s_k') \in B$ for all $k \in \nat$;

\item for all $t \in S$ and $a \in \act \cup \{\tau\}$, if
$s \ltstrans{a} t$, then $a=\tau$ and $(t,s') \in B$, or
$s' \ltstrans{\tau^*} s^* \ltstrans{a} t'$ for some $s^*, t' \in S$
such that $(s,s^*) \in B$ and  $(t,t') \in B$.

\end{itemize}
States $s, s' \in S$ are divergence-sensitive branching bisimilar,
notation $s \dsbbisim s'$ iff there is a symmetric divergence-sensitive
branching simulation relation $B$, such that $(s,s') \in B$.
\end{definition}

\subsection{Trace Equivalence -- Completed Trace Equivalence}
\label{sec:traces}

Trace equivalence and completed trace equivalence are the only
linear-time equivalence relations that we consider in this paper. In
defining these equivalence relations, we require some auxiliary notions,
basically defining what a \emph{computation} is in our respective
models of computation.

\newcommand{\paths}[1]{\ensuremath{\mathsf{Paths}(#1)}}
\begin{definition} Let $K = \tuple{S, \AP, \kstrans, L}$ be a Kripke
Structure. A \emph{path} starting in state $s \in S$ is an infinite
sequence $s_0\ s_1\ \ldots$, such that $s_i \kstrans{} s_{i+1}$ for
all $i$, and $s = s_0$. The set of all paths starting in $s$ is denoted
$\paths{s}$.
\end{definition}
Basically, a path formalises how a single computation evolves in
time. Actually, it is the information contained in the states that are
visited along such a computation that is often of interest, as it shows
how the state information evolves in time. This is exactly captured
by the notion of a \emph{trace}.

\newcommand{\trace}[1]{\ensuremath{\mathsf{Trace}(#1)}}
\newcommand{\traces}[1]{\ensuremath{\mathsf{Traces}(#1)}}
\newcommand{\traceeq}{\ensuremath{\simeq_{\mathrm{t}}}}

\begin{definition} Let $K = \tuple{S, \AP, \kstrans, L}$ be a Kripke
Structure. Let $\pi = s_0\ s_1\ \ldots$ be a path starting in $s_0$. The
\emph{trace} of $\pi$, denoted $\trace{\pi}$, is the infinite sequence
$L(s_0)\ L(s_1)\ \ldots$.  For a set of paths $\Pi$, we set
$$\traces{\Pi} = \{ \trace{\pi} \mid \pi \in  \Pi\}$$
States $s,s' \in S$ are \emph{trace equivalent}, denoted $K \models
s \traceeq s'$, if $\traces{\paths{s}} = \traces{\paths{s'}}$.
\end{definition}

\begin{remark} In the presence of non-totality of the transition relation
of a Kripke Structure, it no longer suffices to consider only the
infinite paths as the basis for defining trace equivalence. Instead,
\emph{maximal} paths are considered, which in addition to the infinite
paths, also contains paths made up of sequences of states that end in
a sink-state, \ie, a state without outgoing edges.
\end{remark}
For models in \LTS, we define similar-spirited concepts; for the
origins of the definition, we refer to Van Glabbeek's lattice of
equivalences~\cite{vanGlabbeek01}.

\newcommand{\runs}[1]{\ensuremath{\mathsf{Runs}(#1)}}
\newcommand{\bareruns}[1]{\ensuremath{\mathsf{Runs_b}(#1)}}
\begin{definition} Let $T = \tuple{S, \act,\ltstrans{}}$ be a Labelled
Transition System.  A \emph{run} starting in a state $s \in S$ is an
infinite, alternating sequence of states and actions $s_0\ a_0\ s_1\
a_1\ \ldots$ satisfying $s_i \ltstrans{a_i} s_{i+1}$ for all $i$, and
$s = s_0$.  The set of all runs starting in $s_0$ is denoted $\runs{s_0}$.

\end{definition}

\begin{definition} Let $T = \tuple{S, \act, \ltstrans{}}$ be a Labelled
Transition System.  The \emph{trace} of a run $\rho = s_0\ a_0\
s_1\ a_1\ \ldots$, denoted $\trace{\rho}$, is the infinite sequence
$a_0\ a_1\ \cdots$.  For a set of runs $R$, we define
$$\traces{R} = \{ \trace{\rho} \mid \rho \in R \}$$
States $s,s' \in S$ are \emph{completed trace equivalent}, denoted by
$T \models s \traceeq s'$ iff $\traces{\runs{s}} = \traces{\runs{s'}}$.

\end{definition}

\begin{theorem}
\label{th:ks_traceeq}
Let $K = \tuple{S, \AP, \kstrans, L}$ be a Kripke Structure. For all
$s,s' \in S$, we have $K \models s \traceeq s'$ iff
$\lts(K) \models s \traceeq s'$.
\end{theorem}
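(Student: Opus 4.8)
The plan is to prove both implications via a structural description of the runs of $\lts(K)$ that start in an \emph{original} state $s \in S$. The key observation is that the only transition leaving a shadow state $\shadow{t}$ is $\shadow{t} \ltstrans{L(t)} t$, so a run sitting at an original state $t$ can only either take a \emph{pause} $t \ltstrans{\bot} \shadow{t} \ltstrans{L(t)} t$ and return to $t$, or \emph{advance} along a $K$-edge $t \kstrans t'$ (emitting $\tau$ when $L(t) = L(t')$ and $L(t')$ otherwise). Hence every $\rho \in \runs{s}$ has one of two shapes: (a) $\rho$ advances infinitely often, so it traces an infinite path $\pi = s_0\ s_1\ s_2 \cdots \in \paths{s}$ of $K$ with some finite number $k_i \ge 0$ of pauses inserted at each $s_i$, and $\trace{\rho} = (\bot\, L(s_0))^{k_0}\, \ell_1\, (\bot\, L(s_1))^{k_1}\, \ell_2\, \cdots$; or (b) $\rho$ advances only finitely often, along a finite path prefix $s_0 \cdots s_m$ of $K$, pausing $k_i$ times at $s_i$ for $i < m$ and then forever at $s_m$, so $\trace{\rho} = (\bot\, L(s_0))^{k_0}\, \ell_1 \cdots \ell_m\, (\bot\, L(s_m))^{\omega}$. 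In both cases $\ell_{i+1}$ is $\tau$ if $L(s_i) = L(s_{i+1})$ and $L(s_{i+1})$ otherwise, so the trace is completely determined by $\pi$ (resp.\ the prefix) and the pause counts.

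For the direction from $K \models s \traceeq s'$ to $\lts(K) \models s \traceeq s'$, I would show $\traces{\runs{s}} \subseteq \traces{\runs{s'}}$; the reverse inclusion is symmetric. Take $\rho \in \runs{s}$. In case (a), $\trace{\pi} \in \traces{\paths{s}} = \traces{\paths{s'}}$, so there is $\pi' = s_0'\ s_1'\cdots \in \paths{s'}$ with $L(s_i') = L(s_i)$ for all $i$; replaying $\rho$ along $\pi'$ with the same pause counts gives a run $\rho' \in \runs{s'}$ whose emitted actions agree with those of $\rho$ letter by letter (each depends only on the relevant $L(s_i)$'s), so $\trace{\rho'} = \trace{\rho}$. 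In case (b), the finite word $L(s_0)\cdots L(s_m)$ is the trace of the prefix $s_0 \cdots s_m$, and by totality of $\kstrans$ this prefix extends to an infinite path, so $L(s_0) \cdots L(s_m)$ is a prefix of some element of $\traces{\paths{s}} = \traces{\paths{s'}}$; picking $\pi' \in \paths{s'}$ realising that prefix and replaying $\rho$ along it (same pause counts, then pausing forever) again yields a run in $\runs{s'}$ with trace $\trace{\rho}$.

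For the reverse direction I would recover $\traces{\paths{s}}$ from $\traces{\runs{s}}$. Call a trace \emph{canonical} if its letter at each position divisible by $3$ is $\bot$ and every other letter is different from $\bot$. By the characterisation above, a run whose trace is canonical must pause exactly once at each original state it visits and advance forever — it is the canonical run along some $\pi = s_0\ s_1 \cdots \in \paths{s}$, with trace $\bot\, L(s_0)\, \ell_1\, \bot\, L(s_1)\, \ell_2 \cdots$; conversely every $\pi \in \paths{s}$ induces such a canonical run. Reading off the letters at positions $1, 4, 7, \ldots$ of a canonical trace $w \in \traces{\runs{s}}$ therefore returns exactly $L(s_0)\ L(s_1) \cdots = \trace{\pi}$, so $\traces{\paths{s}}$ equals the set of these read-offs over all canonical $w \in \traces{\runs{s}}$. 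Consequently $\traces{\runs{s}} = \traces{\runs{s'}}$ entails $\traces{\paths{s}} = \traces{\paths{s'}}$, i.e.\ $K \models s \traceeq s'$.

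I expect the main obstacle to be the run-characterisation lemma, and in particular the claim that a canonical trace can only be realised by a ``pause once, then advance'' run: one must combine the facts that shadow states have a unique outgoing transition, that the advance actions $\ell_i$ are redundant (determined by the $L(s_i)$'s), and that $\bot$ lies on a strict period-$3$ grid, to pin down both the run and the well-definedness of the read-off map. The only other point requiring care is the use of totality of $\kstrans$ in case (b): it is precisely what allows a run that eventually pauses forever to be matched on the other side by extending a finite path prefix to an infinite one.
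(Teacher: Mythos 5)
Your proof is correct, but it takes a genuinely different route from the one in the paper. The paper works with \emph{bare runs} (runs that never use a $\bot$-transition): it sets up a bijection between paths of $K$ and bare runs of $\lts(K)$, proves separately (by induction on trace prefixes) that equality of bare-run traces together with $L(s)=L(s')$ lifts to equality of all run traces, and in the converse direction has to argue somewhat delicately, because a bare trace hides information --- the initial label is invisible and stuttering steps appear as $\tau$ --- which is why the paper needs the auxiliary trace $\trace{s\ \bot\ L(s)\ \rho}$ to recover $L(s)=L(s')$. You instead characterise \emph{all} runs from an original state as interleavings of pauses $t \ltstrans{\bot} \shadow{t} \ltstrans{L(t)} t$ and advances along $K$-edges; the forward direction becomes a replay argument (same pause counts along a label-matching path, with totality of $\kstrans$ covering runs that eventually pause forever, a case the paper's induction lemma absorbs implicitly), and the converse direction uses your \emph{canonical} traces, whose period-3 $\bot$-grid forces the realising run to pause exactly once per visited state, so every $L(s_i)$ is exposed at positions $1,4,7,\ldots$ and $\traces{\paths{s}}$ can be read off from $\traces{\runs{s}}$ directly, with no side condition on $L(s),L(s')$. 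What the paper's route buys is the reusable path/bare-run correspondence stated as a lemma; what yours buys is a cleaner converse, since the canonical traces carry the full state-label sequence and make the recovery of path traces essentially syntactic. The only load-bearing step you must write out carefully is the forced-move argument that a canonical trace can only be realised by the pause-once-then-advance pattern (shadow states having a unique outgoing transition, and pause/advance blocks consuming exactly three, respectively one plus two, letters so the alignment with the $\bot$-grid is preserved), which you have correctly identified.
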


\begin{proof}
See Appendix~\ref{pf:th:lts_bisimilarity} for details.
\qed
\end{proof}
In a similar vein, we obtain that completed trace equivalence in
\LTS is preserved and reflected by trace equivalence in \KS.
\begin{theorem}
\label{th:lts_traceeq}
Let $T = \tuple{S, \act, \ltstrans{}}$ be a Labelled Transition System.
Let $s, s' \in S$ be arbitrary states. We have $T \models s \traceeq s'$ iff
$\ks(T) \models s \traceeq s'$.

\end{theorem}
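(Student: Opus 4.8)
The plan is to prove both directions of the equivalence $T \models s \traceeq s' \iff \ks(T) \models s \traceeq s'$ by establishing a tight correspondence between runs of $T$ and paths of $\ks(T)$, and then observing that this correspondence is trace-preserving in the appropriate sense. The key structural observation is the following. In $\ks(T)$, the states are of two kinds: ``original'' states $s \in S$, all labelled $\{\bot\}$, and ``transition'' states $(s,a,t)$ with $a \neq \tau$, labelled $\{a\}$. A step $s \kstrans t$ in $\ks(T)$ arises either from a $\tau$-transition $s \ltstrans{\tau} t$ in $T$ (staying among original states), or from the two-step detour $s \kstrans (s,a,t) \kstrans t$ encoding a visible transition $s \ltstrans{a} t$. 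Hence an infinite path in $\ks(T)$ from $s$ visits original states interspersed with isolated transition states, and reading off the state labels along it produces an infinite word over $2^{\AP} = 2^{\act \cup \{\bot\}}$ that alternates between blocks of $\{\bot\}$'s (one per $\tau$-step and one per original state visited) and singletons $\{a\}$ (one per visible step), with $\bot$'s separating consecutive visible-label symbols.

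First I would make this precise by defining, for a run $\rho = s_0\ a_0\ s_1\ a_1\ \cdots \in \runs{s_0}$ in $T$, the ``lift'' $\widehat{\rho}$ to a path in $\ks(T)$: replace each segment $s_i \ltstrans{a_i} s_{i+1}$ by $s_i \kstrans s_{i+1}$ if $a_i = \tau$, and by $s_i \kstrans (s_i,a_i,s_{i+1}) \kstrans s_{i+1}$ if $a_i \neq \tau$. Totality of $\ltstrans{}$ in $T$ (every state has an outgoing transition) guarantees $\widehat{\rho}$ is infinite, and conversely every infinite path of $\ks(T)$ starting in an original state $s_0 \in S$ is of the form $\widehat{\rho}$ for a unique run $\rho \in \runs{s_0}$ — this is because from a state $(s,a,t)$ the only outgoing edge is to $t$, so transition states cannot be ``entered'' except immediately after their source, and an infinite path cannot get stuck. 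Thus $\rho \mapsto \widehat{\rho}$ is a bijection between $\runs{s_0}$ and $\paths{s_0}$ for every $s_0 \in S$. Next I would define a ``collapse'' map $c$ on infinite words over $2^{\act \cup \{\bot\}}$ that deletes all occurrences of $\{\bot\}$ and keeps the $\{a\}$-symbols, reading $\{a\}$ as $a$; the point is that $\trace{\widehat{\rho}}$, after applying $c$, is exactly $\trace{\rho}$ (the visible-action sequence $a_0\ a_1\ \cdots$ of the run, with $\tau$'s dropped — but note $\trace{\rho}$ in the \LTS completed-trace sense as defined here is literally $a_0\ a_1\ \cdots$ including $\tau$'s; I will need to reconcile this, likely noting that $\tau$-labels contribute $\{\bot\}$ blocks that $c$ erases, so $c(\trace{\widehat\rho})$ recovers the \emph{visible} trace, and since completed-trace equivalence over a common alphabet is insensitive to a consistent treatment of $\tau$, the two notions agree here — or more carefully, I would define things so the bookkeeping matches).

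The core lemma I would then prove is: $\traces{\paths{s}} = c^{-1}\bigl(\text{something determined by }\traces{\runs{s}}\bigr)$, or more usefully, the two-way implication at the level of trace \emph{sets}. Concretely: $\traces{\runs{s}} = \traces{\runs{s'}}$ in $T$ implies $\traces{\paths{s}} = \traces{\paths{s'}}$ in $\ks(T)$, because every path of $\ks(T)$ from $s$ is $\widehat{\rho}$ for some $\rho \in \runs{s}$, its trace is determined by $\trace{\rho}$, and by hypothesis the same trace $\trace{\rho}$ is realized by some $\rho' \in \runs{s'}$, whence $\widehat{\rho'} \in \paths{s'}$ realizes the same path-trace. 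For the converse, $\traces{\paths{s}} = \traces{\paths{s'}}$ implies $\traces{\runs{s}} = \traces{\runs{s'}}$: given $\rho \in \runs{s}$, its lift $\widehat{\rho}$ has a trace that, by hypothesis, equals $\trace{\widehat{\rho'}}$ for some path $\widehat{\rho'} \in \paths{s'}$ (using the bijection to write every path of $\ks(T)$ from $s'$ as a lift), and since the map $\rho \mapsto \trace{\widehat{\rho}}$ factors as $c^{-1}$-inverse of $\trace{}$, equal path-traces force equal run-traces. I would package the ``$\trace{\widehat{\rho}}$ determines and is determined by $\trace{\rho}$'' fact as an explicit injection on trace sets.

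The main obstacle I anticipate is \emph{not} the bijection on runs/paths — that is routine given totality — but rather the careful handling of $\tau$: the \LTS ``completed trace'' defined in this paper records $\tau$'s explicitly in the sequence $a_0\ a_1\ \cdots$, yet $\ks$ collapses $\tau$-steps into ordinary $\kstrans$-steps whose target is again labelled $\{\bot\}$, so a $\tau$-step and ``no step'' look the same under the state-labelling. I would need to check whether the intended statement uses the $\tau$-blind completed trace (in which case everything lines up cleanly via the collapse $c$) or the $\tau$-explicit one, and if the latter, argue that over the shared alphabet the equivalence still transfers — most likely by observing that a $\tau$-step contributes exactly one extra $\{\bot\}$ to the path-trace, so path-traces of $\ks(T)$ encode precisely the number of $\tau$-steps between consecutive visible actions, and this information is preserved symmetrically, so no collapsing of distinct run-traces occurs. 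A secondary, minor obstacle is the totality side-condition: I must confirm that totality of $\ltstrans{}$ in $T$ is what makes the lift $\widehat{\rho}$ infinite and makes $\ks(T)$'s transition relation total, so that $\paths{\cdot}$ in $\ks(T)$ consists only of infinite sequences, matching the $T$-side; this is exactly the point flagged in the paper's remarks, and I would cite it rather than belabor it.
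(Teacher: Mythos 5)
Your proposal is correct and is essentially the argument the paper intends when it says the proof goes ``along the lines of'' Theorem~\ref{th:ks_traceeq}: a bijection between $\runs{s}$ in $T$ and the infinite paths of $\ks(T)$ starting in states of $S$, combined with the fact that the induced map on traces is injective, since a $\tau$-step contributes one $\{\bot\}$ and a visible step $a$ contributes $\{\bot\}\,\{a\}$, so the ($\tau$-explicit) run-trace is uniquely decodable from the path-trace. Your final resolution of the $\tau$-bookkeeping is the right one (the collapse map $c$ alone would not suffice, as it is not injective), and you correctly observe that this direction is in fact simpler than the $\lts$ direction, since every path of $\ks(T)$ from a state of $S$ is a lift of a run and hence no analogue of Lemma~\ref{lem:bareruns2} is needed.
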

\begin{proof}
Along the lines of the proof for Theorem~\ref{th:ks_traceeq}.\qed
\end{proof}

\section{Minimisations in \LTS and \KS}
\label{sec:minimisations}

As we concluded in Section~\ref{Sect:preliminaries}, the mappings $\lts$
and $\ks$ cannot be used to freely move to and fro the computational
models.  Instead, we introduce two additional mappings, \viz, $\ltsrev$
and $\ksrev$ that act as inverses to $\lts$ and $\ks$, respectively, and
we show that these can be used to come to our results for minimisation.
Here, we focus on the computationally most attractive
equivalences, \viz, \emph{bisimilarity} and \emph{stuttering equivalence}.

\renewcommand{\min}[1]{{#1}\textrm{-min}}
\newcommand{\minKS}[1]{{#1}\textrm{-min}_{\KS}}
\newcommand{\minLTS}[1]{{#1}\textrm{-min}_{\LTS}}
\newcommand{\quotient}[2]{\ensuremath{{#1}_{/#2}}}
Let $\sim {} \in \{\bisim, \stuttering\}$ and $\leftrightarrow {} \in \{ \bisim, \dsbbisim \}$ be arbitrary equivalence
relations on \KS and \LTS, respectively.  For a given model $K$ in \KS, its \emph{quotient}
with respect to $\sim$ is denoted $\quotient{K}{\sim}$. Similarly, for a given model $T$ in \LTS, its \emph{quotient}
with respect to $\leftrightarrow$ is denoted $\quotient{T}{\leftrightarrow}$. We assume
unique functions $\minKS{\sim}$ for \KS, and $\minLTS{\leftrightarrow}$
for \LTS that uniquely determine transition systems that are isomorphic
to the quotient. If, from the equivalence relation $\sim$, the setting
is clear, we drop the subscripts and write $\min{\sim}$ instead.

\subsection{Minimisation in \KS via minimisation in \LTS}

We first characterise a subset of models of \LTS for which we can
define our inverse $\ltsrev$ of $\lts$.

\begin{definition}
Let $T = \tuple{S, \act, \ltstrans{}}$ be a Labelled Transition System.
Then $T$ is reversible iff
\begin{enumerate}
\item $\act = 2^{\AP} \cup \{\bot\}$, for some set $\AP$;

\item for all $s,s'\in S$ and $a \in \act \cup \{\tau\}$,
if $s \ltstrans{a} s'$, then $s'\ltstrans{\bot}$;

\item for all $s,s',s'' \in S$ such that $s \ltstrans{\bot} s'$ and
$s \ltstrans{\bot} s''$, we require that
$s' \ltstrans{a}$ and $s'' \ltstrans{a'}$ implies $a = a'$ for all
actions $a,a' \in \act$.

\end{enumerate}
\end{definition}
Note that any embedding $\lts(K)$ of a Kripke Structure $K$
is a reversible Labelled Transition System. Reversibility
is preserved by the quotients for $\bisim$ and $\dsbbisim$,
as stated by the following proposition.
\begin{proposition}
\label{prop:reversibility}
Let $T$ be an arbitrary reversible Labelled Transition
System. Then $\quotient{T}{\leftrightarrow}$, for $\leftrightarrow {} \in \{\bisim, \dsbbisim\}$,
is reversible. \qed

\end{proposition}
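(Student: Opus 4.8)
The plan is to show that each of the three reversibility conditions is preserved when passing to the quotient $\quotient{T}{\leftrightarrow}$, where $\leftrightarrow \in \{\bisim, \dsbbisim\}$. Write $T = \tuple{S, \act, \ltstrans{}}$ for the reversible \LTS, let $B$ be the equivalence relation (either $\bisim$ or $\dsbbisim$), and write $[s]$ for the $B$-equivalence class of $s$. The quotient $\quotient{T}{\leftrightarrow}$ has state set $\{[s] \mid s \in S\}$, and a transition $[s] \ltstrans{a} [t]$ precisely when there exist representatives $\widehat{s} \in [s]$ and $\widehat{t} \in [t]$ with $\widehat{s} \ltstrans{a} \widehat{t}$ in $T$. (In the branching case one must be slightly careful about which transition relation the quotient carries; I would use the standard quotient construction and recall that for $\dsbbisim$ a $\tau$-transition into the same class is not forced to appear — but since $\widehat{s} \ltstrans{\tau} \widehat{t}$ with $(\widehat{s},\widehat{t}) \in B$ is always \emph{allowed} in the quotient, including it causes no harm; the key point is that whatever $a$-transitions leave $[s]$, they are witnessed by genuine $a$-transitions in $T$.)

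For condition~(1), the action alphabet is not changed by quotienting — $\quotient{T}{\leftrightarrow}$ has the same set of actions $\act = 2^{\AP} \cup \{\bot\}$ — so this is immediate. For condition~(2), suppose $[s] \ltstrans{a} [s']$ in the quotient; then there are representatives with $\widehat{s} \ltstrans{a} \widehat{s'}$ in $T$, so by reversibility of $T$ we get $\widehat{s'} \ltstrans{\bot} u$ for some $u \in S$, hence $[\widehat{s'}] = [s'] \ltstrans{\bot} [u]$ in the quotient. Thus $[s'] \ltstrans{\bot}$, as required. The work is concentrated in condition~(3).

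For condition~(3), suppose $[s] \ltstrans{\bot} [s']$ and $[s] \ltstrans{\bot} [s'']$ in the quotient, and suppose $[s'] \ltstrans{a}$ and $[s''] \ltstrans{a'}$; I must show $a = a'$. First I would pick, using the quotient definition, representatives $p \in [s]$, $q \in [s']$ with $p \ltstrans{\bot} q$, and representatives $r \in [s]$, $w \in [s'']$ with $r \ltstrans{\bot} w$. Since $p$ and $r$ are in the same class, $p \mathrel{B} r$, and since $B$ is a (branching or strong) simulation relation, the step $p \ltstrans{\bot} q$ must be matched from $r$: in the strong case $r \ltstrans{\bot} q^*$ with $q \mathrel{B} q^*$; in the branching case, because $\bot \neq \tau$, the match is $r \ltstrans{\tau^*} r^* \ltstrans{\bot} q^*$ with $r \mathrel{B} r^*$ and $q \mathrel{B} q^*$. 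Either way there is a state reachable from $r$ (via $\bot$, possibly after $\tau$-steps along which the class is preserved — and one should note these intermediate states are $B$-related to $r$, hence still in $[s]$, so the reversibility hypothesis applies to them too) that is $B$-related to $q$, i.e.\ lies in $[s']$. Symmetrically, from $p$ one reaches via $\bot$ a state in $[s'']$. I would then combine this with the transitions $[s'] \ltstrans{a}$ and $[s''] \ltstrans{a'}$: unfolding the quotient definition of these, and simulating them back and forth between class members exactly as above, yields concrete states $x, y \in S$ with $x \ltstrans{\bot}$-reachable from some common source in $S$ and $x \ltstrans{a^\sharp}$, $y \ltstrans{\bot}$-reachable from the same source and $y \ltstrans{a'^\sharp}$, where $a^\sharp, a'^\sharp$ are either $\tau$ or lie in $[a]_{\text{same-label}}$ — here I must be careful that simulating an $a$-step might a priori produce a $\tau$-step instead when $a = \tau$; but the actions $a, a'$ witnessing $[s']\ltstrans a$ and $[s'']\ltstrans{a'}$ in the quotient come from actual $T$-transitions $q \ltstrans{a} \cdot$ and $w \ltstrans{a'} \cdot$, so I can take $x = q$, $y = w$ directly and avoid re-simulation. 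Then condition~(3) \emph{of $T$}, applied to the common $\bot$-predecessor (obtained from the simulation argument of the previous paragraph), forces $a = a'$.

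The main obstacle, and the step deserving the most care, is the branching-bisimulation case of condition~(3): a $\bot$-transition in the quotient may be witnessed by a $\tau^*\cdot\bot$ path in $T$, so "the $\bot$-successors of $[s]$" do not correspond cleanly to "$\bot$-successors of a single state of $T$". The resolution is the observation that every state traversed by such a $\tau^*$ prefix is $\dsbbisim$-equivalent to the source (that is exactly what a branching simulation gives along a stuttering path), hence lies in the \emph{same} class $[s]$, so condition~(3) and condition~(2) of $T$ may be invoked at that state as well; chaining these local applications along the $\tau$-paths and using that $B$ is an equivalence (so classes of matched states coincide) closes the argument. I would also remark that Proposition~\ref{prop:reversibility} for $\bisim$ is the easy special case obtained by deleting every "$\tau^*$" and every "stuttering" clause from the above.
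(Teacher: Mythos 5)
Your handling of conditions (1) and (2), and the whole statement for $\bisim$, is essentially right, but even there one step needs repair: the quotient transitions $[s']\ltstrans{a}$ and $[s'']\ltstrans{a'}$ are witnessed by transitions of \emph{some} representatives $q'\in[s']$ and $w'\in[s'']$, not necessarily of your chosen $q$ and $w$. For strong bisimilarity this is harmless, since $q\bisim q'$ forces an immediate transition $q\ltstrans{a}$, after which matching $p\ltstrans{\bot}q$ from $r$ produces a common $\bot$-predecessor and condition (3) of $T$ gives $a=a'$.

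For $\dsbbisim$, however, the two steps you flag but do not carry out are genuine gaps, and they cannot be closed from reversibility alone. First, a $\dsbbisim$-partner of $q$ may offer its $a$-transition only after inert $\tau$-steps, from states that are not $\bot$-successors of \emph{any} state, and about such states condition (3) of $T$ says nothing; so ``taking $x=q$, $y=w$ directly'' is unjustified. Second, the proposed chaining never produces a common $\bot$-predecessor: matching $p\ltstrans{\bot}q$ from $r$ yields $r\ltstrans{\tau^*}r^*\ltstrans{\bot}q^*$, and to transport $r\ltstrans{\bot}w$ to $r^*$ you must match again, obtaining a further $\tau$-descendant, and so on. That this is not merely presentational is shown by the following reversible $T$ (take $a,b\in 2^{\AP}$ with $a\neq b$): $p\ltstrans{\bot}q$, $q\ltstrans{\tau}m$, $q\ltstrans{\bot}z$, $m\ltstrans{a}x$, $m\ltstrans{b}y$, $m\ltstrans{\bot}z$, $x\ltstrans{\bot}z$, $y\ltstrans{\bot}z$, $z\ltstrans{\bot}z$. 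This system is total and reversible, because its only $\bot$-successors, $q$ and $z$, have all their non-$\tau$ transitions labelled $\bot$; moreover $\id\cup\{(q,m),(m,q)\}$ is a divergence-sensitive branching bisimulation (the $\tau$-step is inert, $m$'s moves are matched by $q$ through it, and there are no divergences), so $q\dsbbisim m$. In the quotient, the class $\{q,m\}$ is a $\bot$-successor of $[p]$ yet has outgoing transitions labelled $a$, $b$ and $\bot$, violating condition (3). So for the $\dsbbisim$ half you need more than the three stated conditions --- for instance the extra structure of $\lts$-images (every $\bot$-successor has a unique outgoing transition, labelled in $\act\setminus\{\bot\}$, and no $\tau$-steps), which restores the ``immediate label'' transfer you rely on and which suffices for the use the paper makes of the proposition, namely for quotients of $\lts(K)$. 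Note also that the paper states this proposition without proof, so there is no argument of record to compare your attempt against.
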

The embedding $\lts$
introduces a fresh, \emph{a priori} known action label $\bot$.
We treat this constant differently from all other actions in
our reverse embedding.
\begin{definition}
\label{translationltsreverse}
Let $T = \tuple{S, \act, \ltstrans{}}$ be a reversible Labelled Transition
System.
We define the Kripke Structure $\ltsrev(T)$ as the structure
$\tuple{S', \AP, \kstrans, L}$, where:
\begin{itemize}
\item $S' = \{s \in S ~|~ s \ltstrans{\bot} \}$;

\item $\AP$ is such that $\act = 2^{\AP} \cup \{\bot\}$;

\item $\kstrans$ is the least relation satisfying the single rule:
$$
\sosrule{s \ltstrans{a} s' \qquad a \not=\bot \qquad s \ltstrans{\bot}}
        {s \kstrans s'}
$$

\item $L(s) = a$ for the unique $a$ such that $s \ltstrans{\bot} s'
\ltstrans{a}$.

\end{itemize}

\end{definition}

\noindent
The following proposition establishes that $\ltsrev$ is the
inverse of embedding $\lts$.

\newcommand{\id}{\ensuremath{\mathsf{Id}}}
\begin{proposition}
\label{prop:lts_inverse}
We have $\ltsrev \circ \lts = \id$.\qed
\end{proposition}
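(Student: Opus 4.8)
The plan is to prove $\ltsrev \circ \lts = \id$ by taking an arbitrary Kripke Structure $K = \tuple{S, \AP, \kstrans, L}$, computing $\lts(K)$, then computing $\ltsrev(\lts(K))$, and checking that the resulting structure is \emph{syntactically identical} to $K$ (not merely isomorphic). First I would recall $\lts(K) = \tuple{S \cup \{\shadow{s} \mid s \in S\}, 2^{\AP} \cup \{\bot\}, \ltstrans{}\,}$, and observe that by construction this is a reversible Labelled Transition System, so $\ltsrev$ is indeed applicable. Let $\ltsrev(\lts(K)) = \tuple{S'', \AP'', \kstrans'', L''}$.

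Next I would verify the four components in turn. For the state space: $S''$ consists of exactly those states of $\lts(K)$ that have an outgoing $\bot$-transition. In $\lts(K)$, the rule $s \ltstrans{\bot} \shadow{s}$ fires for every original state $s \in S$, whereas a shadow state $\shadow{s}$ has only the single outgoing transition $\shadow{s} \ltstrans{L(s)} s$ and no $\bot$-transition (one must check the SOS rules: none of the four rules produce a $\bot$-labelled transition out of a $\shadow{s}$). Hence $S'' = S$. For the atomic propositions: $\AP''$ is defined to be the set with $2^{\AP''} \cup \{\bot\} = \act = 2^{\AP} \cup \{\bot\}$, which forces $\AP'' = \AP$ (taking $\bot \notin 2^{\AP}$, which holds since $\bot$ is fresh). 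For the labelling: $L''(s)$ is the unique $a$ with $s \ltstrans{\bot} \shadow{s} \ltstrans{a}$ in $\lts(K)$; since the only transition out of $\shadow{s}$ is $\shadow{s} \ltstrans{L(s)} s$, we get $L''(s) = L(s)$ for all $s \in S$.

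The main work is the transition relation. By Definition~\ref{translationltsreverse}, $s \kstrans'' s'$ iff there is an action $a \neq \bot$ with $s \ltstrans{a} s'$ in $\lts(K)$ and $s \ltstrans{\bot}$ (the latter being automatic for $s \in S$). So I must show that for $s, s' \in S$, there is a non-$\bot$ transition $s \ltstrans{a} s'$ in $\lts(K)$ if and only if $s \kstrans s'$ in $K$. For the ``if'' direction: if $s \kstrans s'$ in $K$, then either $L(s) = L(s')$, giving $s \ltstrans{\tau} s'$, or $L(s) \neq L(s')$, giving $s \ltstrans{L(s')} s'$; in both cases the label is in $(2^{\AP} \cup \{\bot\}) \cup \{\tau\}$ and is not $\bot$, so $s \kstrans'' s'$. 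For the ``only if'' direction: the only SOS rules of $\lts$ that produce a transition between two states both lying in $S$ (rather than involving a shadow state) are the $\tau$-rule and the $L(t)$-rule, and both have $s \kstrans t$ as a premise; the rules $s \ltstrans{\bot} \shadow{s}$ and $\shadow{s} \ltstrans{L(s)} s$ involve shadow states and so are irrelevant when both endpoints are in $S$. Hence any non-$\bot$ transition $s \ltstrans{a} s'$ with $s, s' \in S$ witnesses $s \kstrans s'$ in $K$.

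I expect the only real subtlety — and the one point deserving care rather than being genuinely hard — is the case analysis on the SOS rules: one must be scrupulous that no shadow state ever acquires a $\bot$-transition and that no ``short-cut'' transition between two original states is introduced beyond those coming from $\kstrans$. Totality of $\kstrans$ in $K$ plays no role here, and the equality (as opposed to isomorphism) is immediate once the four componentwise checks are done, since $\ltsrev$ reuses the very names $s \in S$ and the very label sets. I would close by remarking that the analogous statement $\ksrev \circ \ks = \id$ would be proved the same way once $\ksrev$ is introduced.
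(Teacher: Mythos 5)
Your proposal is correct and follows essentially the same route as the paper's own proof: a componentwise verification that the states, atomic propositions, transition relation and labelling of $\ltsrev(\lts(K))$ coincide with those of $K$, with the same case analysis on the SOS rules of $\lts$ for the transition relation. The only cosmetic difference is that the paper phrases the conclusion as establishing an isomorphism while in fact (like you) proving componentwise equality, so no substantive gap remains.
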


\begin{proof}
Establishing the isomorphism follows immediately from the definitions
and the observation that $\lts(K)$ is reversible. See Appendix \ref{pf:prop:lts_inverse}.\qed
\end{proof}

\newcommand{\simKS}{\sim_{\ks}}
\newcommand{\simLTS}{\sim_{\lts}}

Note that reversibility of a Labelled Transition System $T$ is too
weak to obtain $(\lts \circ \ltsrev)(T) = T$, as the following example
illustrates:
\begin{example}
Consider the Labelled Transition System left below.
\begin{center}
\begin{tikzpicture}[>=stealth',node distance=40pt]
\tikzstyle{every state}=[inner sep=1pt, minimum size=6pt];

\node[state] (t) {};

\node (t') [right of=t] {};
\node (t'') [right of=t'] {};

\node[state] (u) [right of=t''] {};

\node (u') [right of=u] {};
\node (u'') [right of=u'] {};

\node[state] (v) [right of=u''] {};
\node[state] (v') [right of=v,xshift=-20pt] {};

\draw[->] (t) edge [loop left] node [left] {$\bot$} (t)
          (t) edge [loop right] node [right] {$\{a\}$} (t)
          (u) edge [loop right] (u)
          (v) edge [loop left] node [left] {$\tau$} (v)
          (v) edge [bend left] node [above] {$\bot$} (v')
          (v') edge [bend left] node [below] {$\{a\}$} (v)
;

\draw (u) node [left] {$\{a\}$};
\draw[->,dashed] (t') edge node [above] {$\ltsrev$} (t'')
                 (u') edge node [above] {$\lts$} (u'');

\end{tikzpicture}
\end{center}
Clearly, the
Labelled Transition System
is reversible, so the mapping $\ltsrev$ is applicable. Its result is
given by the Kripke Structure in the middle. Applying $\lts$ to the middle Kripke Structure
yields the Labelled Transition System at the right. It is clear that the latter is not isomorphic
to the original Labelled Transition System.\qed
\end{example}

\begin{lemma}
\label{lem:lts_bisim} We have $\minLTS{\bisim} \circ \lts \circ \minKS{\bisim}
= \lts \circ \minKS{\bisim}$.

\end{lemma}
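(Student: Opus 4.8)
\textit{Proof idea.} The plan is to show that for every Kripke Structure $K$ the Labelled Transition System $\lts(\minKS{\bisim}(K))$ is \emph{already} minimal with respect to bisimilarity, i.e. it contains no pair of distinct bisimilar states. Once this is established, the bisimulation quotient of $\lts(\minKS{\bisim}(K))$ is isomorphic to $\lts(\minKS{\bisim}(K))$ itself, so $\minLTS{\bisim}$ leaves it unchanged (up to isomorphism, which is the sense in which the equalities in this section are meant), and the claimed identity follows.

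Write $K' = \minKS{\bisim}(K) = \tuple{S, \AP, \kstrans, L}$; by construction, any two distinct states of $K'$ are non-bisimilar in $K'$. The state space of $\lts(K')$ is $S \cup \{\shadow{s} \mid s \in S\}$, and I distinguish three cases for a pair of distinct states $p, q$ of $\lts(K')$. (i) If $p, q \in S$, then Theorem~\ref{th:lts_bisimilarity} gives $\lts(K') \models p \bisim q$ iff $K' \models p \bisim q$, and the right-hand side fails by minimality of $K'$. (ii) If exactly one of $p, q$, say $p$, lies in $S$ while $q = \shadow{t}$, observe that $p$ has the outgoing transition $p \ltstrans{\bot} \shadow{p}$, whereas the unique outgoing transition of $\shadow{t}$ is $\shadow{t} \ltstrans{L(t)} t$ with $L(t) \in 2^{\AP}$, hence $L(t) \neq \bot$; a symmetric simulation relating $p$ and $\shadow{t}$ would have to match $p \ltstrans{\bot} \shadow{p}$ by a $\bot$-transition out of $\shadow{t}$, which does not exist. (iii) If $p = \shadow{s}$ and $q = \shadow{t}$ with $s \neq t$, the unique outgoing transitions are $\shadow{s} \ltstrans{L(s)} s$ and $\shadow{t} \ltstrans{L(t)} t$, so a symmetric simulation $B$ with $(\shadow{s}, \shadow{t}) \in B$ forces $L(s) = L(t)$ and $(s, t) \in B$; thus $\lts(K') \models s \bisim t$, and by Theorem~\ref{th:lts_bisimilarity} also $K' \models s \bisim t$, contradicting minimality of $K'$ since $s \neq t$. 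Hence every bisimulation equivalence class of $\lts(K')$ is a singleton (the degenerate case $S = \emptyset$ being trivial).

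Consequently $\quotient{\lts(K')}{\bisim}$ is isomorphic to $\lts(K')$, so $\minLTS{\bisim}(\lts(K'))$ coincides with $\lts(K') = \lts(\minKS{\bisim}(K))$, which is exactly $\minLTS{\bisim} \circ \lts \circ \minKS{\bisim} = \lts \circ \minKS{\bisim}$. The only genuinely delicate point is the interplay between the two kinds of states introduced by $\lts$: original states always expose a $\bot$-step while shadow states never do, which keeps the two kinds apart in case (ii), and the reduction of $\shadow{s} \bisim \shadow{t}$ to $s \bisim t$ in case (iii) is what lets us appeal to Theorem~\ref{th:lts_bisimilarity} at the level of genuine Kripke Structures; the remaining steps are routine bookkeeping about the convention for $\minLTS{\bisim}$.
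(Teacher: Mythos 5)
Your proposal is correct and follows essentially the same route as the paper: both argue that $\lts(\minKS{\bisim}(K))$ contains no pair of distinct bisimilar states, via the same case split (two original states, one original and one shadow state distinguished by the $\bot$-step, two shadow states reduced to their underlying states), so that the \LTS-quotient is the identity up to isomorphism. The only difference is cosmetic: where you invoke the reflection direction of Theorem~\ref{th:lts_bisimilarity} to handle the case of two states of $S$, the paper re-derives that direction in place by checking that $B \cap (S \times S)$ is a \KS-bisimulation, contradicting minimality of $\minKS{\bisim}(K)$.
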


\begin{proof}
See Appendix \ref{pf:lem:lts_bisim}.\qed
\end{proof}

\begin{lemma}
\label{lem:lts_stut}
We have $\minLTS{\dsbbisim} \circ \lts \circ \minKS{\stuttering} =
\lts \circ \minKS{\stuttering}$.
\end{lemma}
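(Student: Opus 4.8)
The plan is to mimic the structure of the proof of Lemma~\ref{lem:lts_bisim}, but with $\dsbbisim$ in place of $\bisim$ on the \LTS side and $\stuttering$ in place of $\bisim$ on the \KS side. The statement $\minLTS{\dsbbisim} \circ \lts \circ \minKS{\stuttering} = \lts \circ \minKS{\stuttering}$ asserts that the \LTS obtained by first quotienting a Kripke Structure $K$ with respect to stuttering equivalence, then applying $\lts$, is already minimal with respect to divergence-sensitive branching bisimilarity — so minimising it again (via $\minLTS{\dsbbisim}$) returns an isomorphic transition system. Equivalently, writing $K' = \minKS{\stuttering}(K)$, I must show that no two distinct states of $\lts(K')$ are divergence-sensitive branching bisimilar. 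Since $\minLTS{\dsbbisim}$ yields something isomorphic to the quotient, and the quotient collapses exactly the $\dsbbisim$-classes, showing that $\dsbbisim$ is the identity on $\lts(K')$ gives the claim.

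First I would record the relevant property of $K' = \minKS{\stuttering}(K)$: by construction of the quotient, $K'$ has no two distinct states that are stuttering equivalent, i.e.\ $K' \models s \stuttering s'$ implies $s = s'$. Second, I would invoke the De Nicola--Vaandrager correspondence that $\lts$ preserves and reflects stuttering equivalence as divergence-sensitive branching bisimilarity: for states $s, s'$ of the \KS, $K' \models s \stuttering s'$ iff $\lts(K') \models s \dsbbisim s'$. Combining these, $\lts(K')$ has no two distinct \emph{original} states identified by $\dsbbisim$. The remaining work — and this is where the proof genuinely differs from the bisimilarity case — is to handle the \emph{shadow} states $\shadow{s}$ introduced by $\lts$, and to rule out a shadow state being $\dsbbisim$-equivalent to a non-shadow state.

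For the shadow states, I would argue by a case analysis on the local structure of $\lts(K')$. A shadow state $\shadow{s}$ has exactly one outgoing transition, $\shadow{s} \ltstrans{L(s)} s$, with label $L(s) \in 2^{\AP}$ (never $\tau$, never $\bot$); in particular $\shadow{s}$ has no outgoing $\tau$-transition and hence no $\tau$-divergence. A non-shadow state $t$ always has an outgoing $\bot$-transition $t \ltstrans{\bot} \shadow{t}$, and $\bot \neq L(s)$ since $\bot \notin 2^{\AP}$; so a shadow state and a non-shadow state can never be $\dsbbisim$-related, because the $\bot$-move of $t$ cannot be matched — $\bot \neq \tau$ rules out the stutter clause, and $\shadow{s}$ has no outgoing $\tau^*$ path to reach a state with a $\bot$-transition. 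This also shows two shadow states $\shadow{s}, \shadow{s'}$ with $L(s) \neq L(s')$ are distinguished immediately. For the case $L(s) = L(s')$: if $\shadow{s} \dsbbisim \shadow{s'}$, then matching $\shadow{s} \ltstrans{L(s)} s$ against $\shadow{s'}$ (again the stutter clause is unavailable since $L(s) \neq \tau$, and $\shadow{s'}$ has no $\tau$-steps) forces $\shadow{s'} \ltstrans{L(s')} s'$ with $s \dsbbisim s'$; by the reflection property this gives $s \stuttering s'$ in $K'$, hence $s = s'$ by minimality, hence $\shadow{s} = \shadow{s'}$. Putting the cases together, $\dsbbisim$ is the identity relation on $\lts(K')$, so $\quotient{\lts(K')}{\dsbbisim}$ is isomorphic to $\lts(K')$ and therefore $\minLTS{\dsbbisim}(\lts(K')) = \lts(K')$, which is the statement.

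The main obstacle I anticipate is the careful treatment of the divergence clause in the definition of $\dsbbisim$: one must be sure that introducing the shadow states and the $\bot$-relabelling does not create spurious $\tau$-divergences, and that the De Nicola--Vaandrager preservation/reflection result is being applied to states of the \emph{correct} sort (original states, not shadows). Because every $\tau$-transition in $\lts(K')$ is inherited verbatim from a $\kstrans$-transition of $K'$ between equally-labelled states, an infinite $\tau$-path in $\lts(K')$ corresponds exactly to an infinite $L(s)$-labelled path in $K'$, so the divergence behaviour is faithfully transported — but this correspondence should be stated explicitly rather than left implicit. A secondary, purely bookkeeping concern is confirming that $\minKS{\stuttering}$ does indeed produce a representative with no non-trivial stuttering-equivalent states (so that the minimality appeal is legitimate), which is immediate from the definition of the quotient but worth a sentence.
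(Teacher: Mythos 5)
Your proposal is correct, but it reaches the crucial case by a different route than the paper. Both arguments share the same skeleton: show that divergence-sensitive branching bisimilarity on $\lts(\minKS{\stuttering}(K))$ is the identity, by a case analysis separating shadow states $\shadow{s}$ from original states, and your treatment of the shadow cases (no $\tau$- or $\bot$-transitions from $\shadow{s}$, every original state has a $\bot$-transition, the unique move $\shadow{s} \ltstrans{L(s)} s$ forcing the underlying originals to be related) is essentially the paper's cases as well. The difference is in the case of two distinct original states: you invoke the De Nicola--Vaandrager reflection result ($\lts(K') \models s \dsbbisim s'$ implies $K' \models s \stuttering s'$) as a black box and then appeal to minimality of $K'$, whereas the paper re-derives that reflection inline, constructing from the hypothetical non-trivial $\dsbbisim$-relation a divergence-blind stuttering bisimulation on the augmented structure $K_d$ --- explicitly matching the divergence state $s_d$ against infinite $\tau$-sequences and deriving $L(s)=L(t)$ via the $\bot$/$L(s)$ round trip --- and contradicts minimality of $K_d$. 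Your route is shorter and is legitimate in the paper's setting, since the preservation/reflection of $\stuttering$ through $\dsbbisim$ is an external result the paper already relies on (e.g., in Lemma~\ref{lem:minimal} and Theorem~\ref{th:ks2lts_stut_minimal}), so no circularity arises; what the paper's longer argument buys is self-containedness and an explicit account of how the divergence clause and the $s_d$-construction interact, which is exactly where this lemma is more delicate than its $\bisim$ counterpart. Two minor remarks: your worry about ``spurious divergences'' is not actually load-bearing in your direction of argument (the divergence clause only constrains the hypothetical relation further, and it is already discharged inside the cited correspondence for the original-state case), and the final step $\minLTS{\dsbbisim}(\lts(K')) = \lts(K')$ rests on the paper's convention that the minimisation functions return a canonical representative isomorphic to the quotient --- the same convention the paper's own proof uses silently.
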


\begin{proof}
See Appendix \ref{pf:lem:lts_stut}.\qed
\end{proof}
Before we present the main theorems concerning the minimisations in \KS
through minimisations in \LTS, we first show that it suffices to prove
such results for Kripke Structures that are already minimal; see the
lemma below.
\begin{lemma}
\label{lem:minimal}
Let $\sim {} \in \{\bisim,\stuttering\}$ and
$\leftrightarrow {} \in \{\bisim,\dsbbisim \}$ such that $\lts$ preserves and reflects $\sim$ through
$\leftrightarrow$.  Then
$$
\begin{array}{ll}
& \min{\sim} = \ltsrev \circ \min{\leftrightarrow} \circ \lts \circ
\min{\sim} \\
\text{implies } & \\
& \min{\sim} = \ltsrev \circ \min{\leftrightarrow} \circ \lts
\end{array}
$$
\end{lemma}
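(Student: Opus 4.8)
The plan is to prove the implication by showing that the right-hand side of the conclusion, applied to an arbitrary Kripke Structure $K$, yields a structure isomorphic to $\min{\sim}(K)$, using the hypothesis as the key intermediate fact. The first step is to recall what needs to be established: for every $K$, the composition $\ltsrev \circ \min{\leftrightarrow} \circ \lts$ produces (a canonical representative isomorphic to) the $\sim$-quotient of $K$. Since $\min{\sim}$ already does this, it suffices to relate $\ltsrev \circ \min{\leftrightarrow} \circ \lts$ applied to $K$ with the same composition applied to $\min{\sim}(K)$, and then invoke the hypothesis.

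The central observation I would exploit is that $\min{\sim}$ is idempotent up to isomorphism: $\min{\sim}(\min{\sim}(K))$ is isomorphic to $\min{\sim}(K)$, because the $\sim$-quotient of an already-minimal structure is (isomorphic to) that structure itself. More generally, $\min{\sim}(K)$ and $K$ are related by the property that they have the same $\sim$-equivalence classes up to the canonical identification, so $\min{\sim}(K') \cong \min{\sim}(K)$ whenever $K' \cong \min{\sim}(K)$. Combining this idempotency with the hypothesis $\min{\sim} = \ltsrev \circ \min{\leftrightarrow} \circ \lts \circ \min{\sim}$, I can write, for arbitrary $K$,
$$
(\ltsrev \circ \min{\leftrightarrow} \circ \lts)(K) \;\cong\; (\ltsrev \circ \min{\leftrightarrow} \circ \lts)(\min{\sim}(K)) \;=\; \min{\sim}(\min{\sim}(K)) \;\cong\; \min{\sim}(K),
$$
where the first isomorphism is the step that requires genuine work and the second equality is exactly the hypothesis (with $\min{\sim}(K)$ playing the role of the argument).

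The main obstacle is therefore justifying that first isomorphism: that $\ltsrev \circ \min{\leftrightarrow} \circ \lts$ gives isomorphic results on $K$ and on $\min{\sim}(K)$. This is where the assumption that $\lts$ preserves and reflects $\sim$ through $\leftrightarrow$ does its job. Since $K$ and $\min{\sim}(K)$ have the same collection of $\sim$-classes, and preservation/reflection means $\lts(K)$ and $\lts(\min{\sim}(K))$ have the same collection of $\leftrightarrow$-classes arising from states of $K$ (modulo the canonical correspondence of states), their $\leftrightarrow$-quotients $\min{\leftrightarrow}(\lts(K))$ and $\min{\leftrightarrow}(\lts(\min{\sim}(K)))$ are isomorphic. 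One must be slightly careful here because $\lts$ introduces shadow states $\shadow{s}$ that are not themselves images of $K$-states, but these are determined functorially by the real states and collapse consistently under $\leftrightarrow$-quotienting (this is essentially the content used in Lemmas~\ref{lem:lts_bisim} and~\ref{lem:lts_stut}); hence the shadow structure is carried along isomorphically as well. Applying $\ltsrev$ — which is well-defined on these reversible structures by Proposition~\ref{prop:reversibility} and Proposition~\ref{prop:lts_inverse} — preserves this isomorphism, giving the desired conclusion. The remaining bookkeeping is routine: one checks that the isomorphisms respect the state labellings and transition relations, which follows directly from the definitions of $\lts$, $\ltsrev$, and the quotient constructions.
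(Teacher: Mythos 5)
Your proposal is correct and follows essentially the same route as the paper's proof: from $\min{\sim}(K) \sim K$ and the preservation/reflection assumption you get $\lts(K) \leftrightarrow \lts(\min{\sim}(K))$, hence identical (isomorphic) $\leftrightarrow$-quotients, and then functionality of $\ltsrev$ together with preservation of reversibility (Proposition~\ref{prop:reversibility}) lets you compose with the hypothesis. The only difference is cosmetic: your detour through idempotency of $\min{\sim}$ is unnecessary, since the hypothesis applied directly to $K$ already gives $(\ltsrev \circ \min{\leftrightarrow} \circ \lts)(\min{\sim}(K)) = \min{\sim}(K)$.
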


\begin{proof}
Assume that we have
\begin{equation}
\tag{*}
\label{eq:*}
\min{\sim} = \ltsrev \circ \min{\leftrightarrow} \circ \lts \circ
\min{\sim}
\end{equation}
By definition of $\min{\sim}$, we find
$ \forall K: \min{\sim}(K)\ \sim\ K$.
Since, by assumption, $\lts$ preserves and reflects $\sim$ through
$\leftrightarrow$, we derive
$
\forall K: \lts(K) \leftrightarrow \lts(\min{\sim}(K))
$.
By definition of $\min{\leftrightarrow}$, this means that we have:
$$
\min{\leftrightarrow} \circ \lts\ = \
           \min{\leftrightarrow} \circ \lts \circ \min{\sim}
$$
As $\ltsrev$ is functional, and $\min{\leftrightarrow}$ preserves
reversibility, we immediately obtain:
\begin{equation}
\tag{**}
\label{eq:**}
\ltsrev \circ \min{\leftrightarrow} \circ \lts =
\ltsrev \circ \min{\leftrightarrow} \circ \lts \circ \min{\sim}
\end{equation}
The desired conclusion then follows by combining~\ref{eq:*} and~\ref{eq:**}.
\qed
\end{proof}
We finally state the two main theorems in this section.
\begin{theorem}
\label{th:ks2lts_bisim_minimal}
We have $\minKS{\bisim} = \ltsrev \circ \minLTS{\bisim} \circ \lts$.
\end{theorem}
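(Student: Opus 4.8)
The plan is to reduce the claim to its ``already minimal'' instance using Lemma~\ref{lem:minimal}, and then to collapse the resulting composition of maps using Lemma~\ref{lem:lts_bisim} together with Proposition~\ref{prop:lts_inverse}. Concretely, I would instantiate Lemma~\ref{lem:minimal} with $\sim {} = \bisim$ on \KS and $\leftrightarrow {} = \bisim$ on \LTS. Its hypothesis, namely that $\lts$ preserves and reflects $\bisim$ through $\bisim$, is precisely Theorem~\ref{th:lts_bisimilarity}. Therefore, by Lemma~\ref{lem:minimal}, it suffices to establish the weaker identity
$$
\minKS{\bisim} = \ltsrev \circ \minLTS{\bisim} \circ \lts \circ \minKS{\bisim},
$$
i.e.\ the statement restricted to inputs that are already $\bisim$-minimal Kripke Structures.

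Next I would simplify the right-hand side of this weaker identity. By Lemma~\ref{lem:lts_bisim} we have $\minLTS{\bisim} \circ \lts \circ \minKS{\bisim} = \lts \circ \minKS{\bisim}$, so the right-hand side equals $\ltsrev \circ \lts \circ \minKS{\bisim}$. Since every embedding $\lts(K')$ is a reversible Labelled Transition System, the map $\ltsrev$ is legitimately applicable to $\lts(\minKS{\bisim}(K))$ for every $K$, and Proposition~\ref{prop:lts_inverse} gives $\ltsrev \circ \lts = \id$. Hence $\ltsrev \circ \lts \circ \minKS{\bisim} = \minKS{\bisim}$, which is exactly the required identity, completing the argument.

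I do not expect a genuine obstacle in this particular proof: it is essentially a one-line composition of facts already in hand, with the real technical content residing in Lemmas~\ref{lem:lts_bisim} and~\ref{lem:minimal}. The only point demanding care is ensuring that all compositions are well-typed --- in particular that $\minLTS{\bisim} \circ \lts$ yields reversible Labelled Transition Systems, so that $\ltsrev$ may be applied at all. This is guaranteed by reversibility of $\lts(K')$ combined with Proposition~\ref{prop:reversibility} on preservation of reversibility under the $\bisim$-quotient. A secondary point worth stating explicitly is that Lemma~\ref{lem:minimal} is quoted here as an implication between functional identities, so one should remark that the two composed maps agree on every argument rather than merely on a representative.
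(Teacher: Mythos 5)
Your proposal is correct and follows essentially the same route as the paper: it combines Lemma~\ref{lem:minimal} (whose hypothesis is Theorem~\ref{th:lts_bisimilarity}), Lemma~\ref{lem:lts_bisim}, Proposition~\ref{prop:lts_inverse}, and Proposition~\ref{prop:reversibility}, differing only in presentation order (you invoke Lemma~\ref{lem:minimal} up front to reduce to the minimal case, while the paper derives the hypothesis of that lemma first and applies it last). Your explicit appeal to $\ltsrev \circ \lts = \id$ and to preservation of reversibility makes visible two steps the paper leaves implicit, which is a welcome clarification rather than a deviation.
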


\begin{proof}
Lemma~\ref{lem:lts_bisim} guarantees
\[
\minLTS{\bisim} \circ \lts \circ \minKS{\bisim}
=
\lts \circ \minKS{\bisim}
\]
Functionality of $\ltsrev$, combined with Proposition~\ref{prop:reversibility},
we find:
\[
\ltsrev \circ \minLTS{\bisim} \circ \lts \circ \minKS{\bisim}
=
\ltsrev \circ \lts\circ \minKS{\bisim}
\]
By Lemma~\ref{lem:minimal}, we then have our desired conclusion:
$$
\minKS{\bisim}
=
\ltsrev \circ \minLTS{\bisim} \circ \lts
$$
\qed
\end{proof}

\begin{theorem}
\label{th:ks2lts_stut_minimal}
We have  $\minKS{\stuttering}
= \ltsrev \circ \minLTS{\dsbbisim} \circ \lts$.
\end{theorem}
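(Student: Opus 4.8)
The plan is to replay, almost verbatim, the three--step recipe used in the proof of Theorem~\ref{th:ks2lts_bisim_minimal}, but with the divergence--sensitive ingredients in place of the strong ones. First I would invoke Lemma~\ref{lem:lts_stut} to get
\[
\minLTS{\dsbbisim} \circ \lts \circ \minKS{\stuttering} = \lts \circ \minKS{\stuttering}.
\]
Next I would compose both sides on the left with $\ltsrev$. This step is justified exactly as in the bisimilarity case: $\ltsrev$ is a total function on reversible Labelled Transition Systems, the Labelled Transition System $\lts(\minKS{\stuttering}(K))$ is reversible because every image of $\lts$ is reversible, and its $\dsbbisim$--quotient is reversible by Proposition~\ref{prop:reversibility} instantiated with $\leftrightarrow {}= \dsbbisim$. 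Hence
\[
\ltsrev \circ \minLTS{\dsbbisim} \circ \lts \circ \minKS{\stuttering} = \ltsrev \circ \lts \circ \minKS{\stuttering}.
\]

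Then, using Proposition~\ref{prop:lts_inverse} (i.e.\ $\ltsrev \circ \lts = \id$), the right--hand side collapses to $\minKS{\stuttering}$, so we have established the hypothesis of Lemma~\ref{lem:minimal}, namely
\[
\minKS{\stuttering} = \ltsrev \circ \minLTS{\dsbbisim} \circ \lts \circ \minKS{\stuttering}.
\]
Finally I would apply Lemma~\ref{lem:minimal} with $\sim {}= \stuttering$ and $\leftrightarrow {}= \dsbbisim$. Its side condition --- that $\lts$ preserves and reflects $\stuttering$ through $\dsbbisim$ --- is precisely the De Nicola--Vaandrager correspondence recalled in Section~\ref{sec:stuttering}, so the lemma upgrades the conditional identity above to the unconditional $\minKS{\stuttering} = \ltsrev \circ \minLTS{\dsbbisim} \circ \lts$, which is the claim.

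The structural skeleton is identical to the bisimilarity case, so there is no real new obstacle at the level of this theorem; the genuinely new content sits in Lemma~\ref{lem:lts_stut} and in the underlying De Nicola--Vaandrager result. The delicate point there is that divergence --- infinite $\tau$--paths on the \LTS side, and infinite paths through states carrying the same label on the \KS side --- must be tracked faithfully both through the $\lts$--encoding and through quotienting. This is exactly why the \emph{divergence--sensitive} variant of branching bisimilarity, rather than plain branching bisimilarity, is the correct partner for stuttering equivalence here, and correspondingly why Proposition~\ref{prop:reversibility} had to be stated for $\dsbbisim$ as well as for $\bisim$. Provided those earlier results are in hand, the present proof is pure bookkeeping.
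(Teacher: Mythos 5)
Your proposal is correct and is exactly the paper's argument: the paper proves this theorem by repeating the proof of Theorem~\ref{th:ks2lts_bisim_minimal} with Lemma~\ref{lem:lts_stut} in place of Lemma~\ref{lem:lts_bisim}, using Proposition~\ref{prop:reversibility} for $\dsbbisim$, functionality of $\ltsrev$, Proposition~\ref{prop:lts_inverse}, and Lemma~\ref{lem:minimal} instantiated with $\sim{}=\stuttering$ and $\leftrightarrow{}=\dsbbisim$, just as you do.
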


\begin{proof}
Similar to
Theorem~\ref{th:ks2lts_bisim_minimal}, using Lemma~\ref{lem:lts_stut}
instead of Lemma~\ref{lem:lts_bisim}. \qed
\end{proof}

\subsection{Minimisation in \LTS via minimisation in \KS}

In the previous section, we showed that one can minimise in \KS
with respect to bisimilarity or stuttering equivalence,
using the embedding $\lts$, a matching equivalence relation in \LTS and
converting to \KS again. In a similar vein, we propose a reverse
translation for $\ks$, which allows one to return to \LTS from
\KS. We first characterise a set of Kripke Structures that are amenable
to translating to Labelled Transition Systems.

\begin{definition} Let $K = \tuple{S, \AP, \kstrans, L}$ be a Kripke
Structure. Then $K$ is reversible iff
\begin{enumerate}
\item $\AP = \act \cup \{\bot\}$ for some set $\act$;
\item $|L(s)| = 1$ for all $s \in S$;
\item for all $s$ for which $\bot \notin L(s)$, we require that
for all $s', s''$, $s \kstrans s'$ and $s \kstrans s''$ implies both $s' = s''$ and
$L(s') = \{\bot\}$.

\end{enumerate}

\end{definition}
\begin{proposition}
\label{prop:reversibility2}
Let $K$ be an arbitrary reversible Kripke Structure.
Then $\quotient{K}{\sim}$, for $\sim \in \{\bisim, \stuttering\}$,
is reversible.

\end{proposition}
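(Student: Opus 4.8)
The plan is to verify the three clauses in the definition of reversibility directly on $\quotient{K}{\sim}$. For both $\sim {}\in \{\bisim,\stuttering\}$ the quotient is a well-defined Kripke Structure whose states are the $\sim$-classes $[s]$ of $S$, whose labelling is $L([s]) = L(s)$ --- well defined because both equivalences refine the state labelling (for $\stuttering$, a divergence-blind stuttering equivalence on $K_d$ relates only states with equal $L_d$-label, and $L_d$ restricts to $L$ on $S$) --- and in which $[s] \kstrans [u]$ holds exactly when $s_1 \kstrans u_1$ in $K$ for some $s_1 \in [s]$ and $u_1 \in [u]$. Clauses~1 and~2 are then immediate: $\AP$ is unchanged by quotienting, so $\AP = \act \cup \{\bot\}$ still holds, and $|L([s])| = |L(s)| = 1$ since $K$ is reversible.

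Clause~3 is the real work. Fix a class $[s]$ with $\bot \notin L([s])$, \ie\ $\bot \notin L(s)$. The key lemma to establish is: \emph{if $s \sim s_1$ and $\bot \notin L(s)$, then the (by reversibility of $K$, unique) successors $t$ of $s$ and $t_1$ of $s_1$ satisfy $t \sim t_1$}. Granting this, every $K$-transition leaving a representative of $[s]$ ends in the single class $[t]$, so $[s] \kstrans [u]$ holds iff $[u] = [t]$, and $L([t]) = L(t) = \{\bot\}$, which is exactly clause~3. When $\sim$ is bisimilarity the lemma is immediate: transferring $s \kstrans t$ along a symmetric simulation containing $(s,s_1)$, the matching move of $s_1$ can only be $s_1 \kstrans t_1$ since $t_1$ is $s_1$'s sole successor, whence $t \bisim t_1$.

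The main obstacle is establishing this lemma when $\sim$ is stuttering equivalence, \ie\ showing that no nontrivial stuttering can leave a state $s$ of a reversible Kripke Structure with $\bot \notin L(s)$. One reasons inside $K_d$. Since the unique $K$-successor of $s$ carries the label $\{\bot\} \neq L(s)$, the state $s$ lies on no infinite path of $L(s)$-labelled states, so $s$ acquires no edge to $s_d$ and still has $t$ as its only successor in $K_d$; the same holds for $s_1$, whose only $K_d$-successor is $t_1$. Take a divergence-blind stuttering equivalence $B$ on $K_d$ with $(s,s_1) \in B$ and apply its transfer condition to $s \kstrans_d t$, obtaining a path $s_1 = r_0 \kstrans_d r_1 \kstrans_d \cdots \kstrans_d r_n$ with $(t,r_n) \in B$ and $(s,r_i) \in B$ for all $i < n$. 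A short label count then forces $n = 1$: if $n = 0$ then $(t,s_1) \in B$ yields $\{\bot\} = L_d(t) = L_d(s_1) = L(s)$, a contradiction; if $n \geq 1$ then $r_1$ must be the unique successor $t_1$ of $s_1$, so $L_d(r_1) = \{\bot\}$, while $n \geq 2$ would additionally give $(s,r_1) \in B$ and hence $\bot \notin L_d(r_1)$, again a contradiction. Thus $n = 1$, $r_1 = t_1$, and $(t,t_1) \in B$, so $t \stuttering t_1$. This completes the lemma, and with it clause~3 and the proposition.
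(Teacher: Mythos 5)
Your proof is correct, and there is nothing in the paper to measure it against: Proposition~\ref{prop:reversibility2} is stated without proof (as is its \LTS counterpart, Proposition~\ref{prop:reversibility}), so your write-up supplies the missing argument rather than deviating from one. The substance is exactly where you put it: clauses 1 and 2 survive quotienting trivially, and clause 3 reduces to the lemma that $\sim$-equivalent states not labelled $\{\bot\}$ have $\sim$-equivalent (unique) successors. For $\bisim$ this is the one-step transfer condition; for $\stuttering$ your observation that such a state $s$ lies on no infinite path of $L(s)$-labelled states (its sole successor carries the different label $\{\bot\}$), hence gets no edge to $s_d$ in $K_d$, is the key point, and the label count forcing $n=1$ in the transfer condition of the divergence-blind stuttering equivalence is sound: $n=0$ would force $L(s)=\{\bot\}$, and $n\geq 2$ would force $(s,r_1)\in B$ with $r_1$ the $\{\bot\}$-labelled successor of $s_1$, both contradictions. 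One small caveat: the paper never fixes the quotient construction (it only postulates functions yielding structures isomorphic to ``the quotient''), and for $\stuttering$ the usual minimal quotient omits inert transitions rather than taking the plain existential lifting you posit. This does not affect your argument: clauses 1 and 2 are insensitive to that choice, and the transitions relevant to clause 3 leave a class whose label differs from $\{\bot\}$ and enter a $\{\bot\}$-labelled class, so they are never inert and are present under either construction; hence $K_{/\sim}$ is reversible in both readings and your proof stands.
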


\begin{definition}
Let $K = \tuple{S, \AP, \kstrans, L}$ be a reversible Kripke Structure.
The Labelled Transition System $\ksrev(K)$ is the structure
$\tuple{S', \act, \ltstrans{}}$, where:
\begin{itemize}
\item $S' = \{s \in S ~|~ L(s) = \{\bot\} \}$;

\item $\act$ is such that $\act = \AP \setminus \{\bot\}$;

\item $\ltstrans{}$ is the least relation satisfying:
$$
\begin{array}{cp{1cm}c}
\sosrule{s \kstrans s' \qquad L(s) = L(s')}
        {s \ltstrans{\tau} s'}
& &
\sosrule{s \kstrans s'' \qquad a \in L(s'')\setminus \{\bot\} \qquad s'' \kstrans s'}
        {s \ltstrans{a} s'}
\end{array}
$$

\end{itemize}
\end{definition}
\begin{proposition}
\label{prop:ks_inverse}
We have $\ksrev \circ \ks = \id$.
\end{proposition}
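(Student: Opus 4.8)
The plan is to unfold the two embeddings in sequence, exactly mirroring the argument sketched for Proposition~\ref{prop:lts_inverse}. First I would verify that, for every $T = \tuple{S,\act,\ltstrans{}}$, the Kripke Structure $\ks(T)$ is reversible, so that $\ksrev$ is applicable: by Definition~\ref{translationks} its atomic propositions are $\act \cup \{\bot\}$; every state carries a singleton label (either $\{\bot\}$ for an original state $s \in S$, or $\{a\}$ for an intermediate state $(s,a,t)$); and the only states whose label omits $\bot$ are the intermediate states $(s,a,t)$, each of which has the single outgoing edge $(s,a,t) \kstrans t$ with $L(t) = \{\bot\}$. Hence conditions~1--3 of reversibility hold (this is essentially the remark preceding Proposition~\ref{prop:reversibility2} applied to $\ks(T)$ rather than to a quotient).

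Next I would compute $\ksrev(\ks(T)) = \tuple{S'', \act', \ltstrans{}'}$ directly from the definitions. Its state set $S''$ is the set of states of $\ks(T)$ labelled $\{\bot\}$, which is precisely $S$; and $\act' = (\act \cup \{\bot\}) \setminus \{\bot\} = \act$. The remaining and central task is to show $\ltstrans{}' = \ltstrans{}$, which splits along the two rules defining $\ltstrans{}'$. For the $\tau$-rule: $s \ltstrans{\tau}' s'$ requires $s \kstrans s'$ in $\ks(T)$ with $L(s) = L(s')$; since $s \in S''$ has $L(s) = \{\bot\}$, also $L(s') = \{\bot\}$, so $s' \in S$, and the only $\kstrans$-edges of $\ks(T)$ between two original states are those introduced by the rule $s \ltstrans{\tau} t \Rightarrow s \kstrans t$, so $s \ltstrans{\tau}' s'$ iff $s \ltstrans{\tau} s'$ in $T$. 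For the action rule: $s \ltstrans{a}' s'$ requires a two-step path $s \kstrans s'' \kstrans s'$ in $\ks(T)$ with $a \in L(s'') \setminus \{\bot\}$; a state $s''$ with $L(s'') \neq \{\bot\}$ can only be an intermediate state $(s,b,t)$, and then $a = b \neq \tau$; its sole outgoing edge forces $s' = t$, while the edge $s \kstrans (s,b,t)$ exists exactly when $s \ltstrans{b} t$ in $T$. Conversely each transition of $T$ is recovered in this way. Thus $\ltstrans{}' = \ltstrans{}$, and $\ksrev(\ks(T))$ is (isomorphic to) $T$, which gives $\ksrev \circ \ks = \id$.

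The only delicate point I anticipate is the bookkeeping in the action-rule case: one must check that no spurious transitions arise, i.e.\ that the middle state $s''$ of the two-step path is necessarily an intermediate state $(s,b,t)$ and never an original state reached via a $\tau$-edge (such a state is labelled $\{\bot\}$ and so contributes no label $a \neq \bot$), and that the unique successor of an intermediate state pins the target down uniquely. Once this is settled, everything else is a routine unfolding of Definition~\ref{translationks} and the definition of $\ksrev$; no properties beyond those definitions are needed, and reversibility of $\ks(T)$ established in the first step is exactly what makes $\ksrev(\ks(T))$ well defined.
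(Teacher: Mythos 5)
Your proposal is correct and takes essentially the same approach as the paper: Proposition~\ref{prop:ks_inverse} is proved there by exactly the kind of component-by-component unfolding used for Proposition~\ref{prop:lts_inverse} (observing that $\ks(T)$ is reversible so $\ksrev$ applies, then checking that the state set, action set and the two transition rules recover $T$), which is precisely what you carry out, including the key bookkeeping that the middle state in the action rule must be an intermediate triple with a unique successor.
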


\begin{proof}
Similar to the proof of Proposition \ref{prop:lts_inverse}.\qed
\end{proof}

\noindent
Without further elaboration, we state the final results.
\begin{theorem}
\label{th:lts2ks_bisim_minimal}
We have $\minLTS{\bisim} = \ksrev \circ \minKS{\bisim} \circ \ks$.\qed
\end{theorem}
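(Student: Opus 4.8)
The plan is to follow the blueprint of the proof of Theorem~\ref{th:ks2lts_bisim_minimal}, but one has to account for an asymmetry between the two embeddings. Whereas $\lts$ applied to a $\bisim$-minimal Kripke Structure is again $\bisim$-minimal (Lemma~\ref{lem:lts_bisim}), the embedding $\ks$ applied to a $\bisim$-minimal Labelled Transition System $T$ is in general \emph{not} $\bisim$-minimal: two transition states $(s,a,t)$ and $(s',a,t)$ that agree on their label and on their unique successor but have different sources $s\neq s'$ (which happens as soon as $s$ and $s'$ are non-bisimilar yet both have an $a$-step to $t$) are bisimilar in $\ks(T)$. Hence there is no literal counterpart of Lemma~\ref{lem:lts_bisim}. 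The theorem nevertheless holds because $\ksrev$ discards the identities of the transition states altogether, and is therefore insensitive to precisely this collapse.

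First I would establish the $\ks$-counterpart of Lemma~\ref{lem:minimal}: since $\ks$ preserves and reflects $\bisim$ (the theorem of Section~\ref{sec:bisimilarity}), one has
\[
\minLTS{\bisim} = \ksrev \circ \minKS{\bisim} \circ \ks \circ \minLTS{\bisim}
\quad\text{implies}\quad
\minLTS{\bisim} = \ksrev \circ \minKS{\bisim} \circ \ks .
\]
The argument is dual to that of Lemma~\ref{lem:minimal}: for every $T$ one has $\minLTS{\bisim}(T) \bisim T$; preservation and reflection of $\bisim$ by $\ks$ gives $\ks(T) \bisim \ks(\minLTS{\bisim}(T))$, whence $\minKS{\bisim}\circ\ks = \minKS{\bisim}\circ\ks\circ\minLTS{\bisim}$; since $\ks(\cdot)$ is always reversible, $\minKS{\bisim}$ preserves reversibility (Proposition~\ref{prop:reversibility2}), and $\ksrev$ is functional, applying $\ksrev$ to both sides yields $\ksrev\circ\minKS{\bisim}\circ\ks = \ksrev\circ\minKS{\bisim}\circ\ks\circ\minLTS{\bisim}$, and combining with the hypothesis finishes. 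Thus it suffices to prove that $\ksrev(\minKS{\bisim}(\ks(T)))$ is isomorphic to $T$ for every $\bisim$-minimal Labelled Transition System $T$.

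For such a $T$ the core of the argument is a direct computation. I would first note that $\ks(T)$ is reversible (the three conditions are immediate: $\AP = \act\cup\{\bot\}$, every state carries a singleton label, and a non-$\{\bot\}$-labelled state $(s,a,t)$ has only the successor $t$, which carries label $\{\bot\}$). Next I would pin down $\bisim$ on $\ks(T)$: since $\ks$ reflects $\bisim$ and $T$ is minimal, two \emph{original} states are bisimilar in $\ks(T)$ iff they are equal; two \emph{transition} states $(s,a,t)$ and $(s',a',t')$ are bisimilar in $\ks(T)$ iff $a=a'$ and $t=t'$ (equality of labels forces $a=a'$, the unique $\kstrans$-successor forces $t \bisim t'$ in $\ks(T)$ hence $t=t'$ by reflection and minimality, while the converse pairs are related by the obvious symmetric simulation); and no original state is bisimilar to a transition state, their labels being $\{\bot\}$ and $\{a\}$. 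Consequently $\minKS{\bisim}(\ks(T))$ has, as its $\{\bot\}$-labelled states, exactly the singleton classes $\{s\}$ ($s\in S$), while the transition states are merged only according to their $(a,t)$ data; it is reversible by Proposition~\ref{prop:reversibility2}. Finally I would unfold the definition of $\ksrev$: it retains precisely the states $\{s\}$, so $S'\cong S$; it emits $\{s\}\ltstrans{\tau}\{t\}$ iff the quotient has a direct edge $\{s\}\kstrans\{t\}$, i.e.\ iff $s\ltstrans{\tau}t$ in $T$; and it emits $\{s\}\ltstrans{a}\{t\}$ for $a\neq\bot$ iff the quotient has $\{s\}\kstrans c\kstrans\{t\}$ with $L(c)=\{a\}$, i.e.\ iff $s\ltstrans{a}t$ in $T$. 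Hence $\ksrev(\minKS{\bisim}(\ks(T)))$ and $T$ have the same states and the same transitions, so they are isomorphic.

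The main obstacle is this middle computation: characterising the bisimulation quotient of $\ks(T)$ tightly enough, and then tracing it through the definition of $\ksrev$ while carefully separating $\tau$-transitions from the rest --- recall that $\ks$ turns $\tau$-steps into direct $\kstrans$-edges but routes visible $a$-steps through a fresh $\{a\}$-labelled state, and $\ksrev$ is designed to invert exactly this asymmetry. Everything else is bookkeeping. One may also bypass the $\ks$-counterpart of Lemma~\ref{lem:minimal} altogether by performing the same analysis of $\bisim$ on $\ks(T)$ for an \emph{arbitrary} $T$ and checking directly that $\ksrev(\minKS{\bisim}(\ks(T)))$ is isomorphic to $T/{\bisim} = \minLTS{\bisim}(T)$.
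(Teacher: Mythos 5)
Your proposal is correct, and it is worth noting that the paper itself offers no argument for this theorem at all (it is stated ``without further elaboration''), implicitly inviting the reader to mirror the proof of Theorem~\ref{th:ks2lts_bisim_minimal}. Your key observation is exactly why that mirror image cannot be taken literally: the counterpart of Lemma~\ref{lem:lts_bisim}, namely $\minKS{\bisim} \circ \ks \circ \minLTS{\bisim} = \ks \circ \minLTS{\bisim}$, is false, since for a $\bisim$-minimal $T$ with distinct states $s \neq s'$ both having an $a$-step to the same $t$, the states $(s,a,t)$ and $(s',a,t)$ of $\ks(T)$ carry the same label $\{a\}$ and the same unique successor $t$, hence are bisimilar but distinct, so $\ks(T)$ is not minimal. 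Your replacement strategy --- the $\ks$-analogue of Lemma~\ref{lem:minimal} (which only needs that $\ks$ preserves and reflects $\bisim$, that $\ks(T)$ is reversible, that $\minKS{\bisim}$ preserves reversibility by Proposition~\ref{prop:reversibility2}, and that $\ksrev$ is functional) combined with the direct computation that $\ksrev(\minKS{\bisim}(\ks(T)))$ is isomorphic to $T$ for minimal $T$ --- is sound. Your characterisation of $\bisim$ on $\ks(T)$ (original states bisimilar iff equal, transition states bisimilar iff they agree on the action and the target, no cross-pairs because of the labels $\{\bot\}$ versus $\{a\}$) is correct, and unfolding $\ksrev$ over the quotient indeed recovers exactly the states and the $\tau$- and $a$-transitions of $T$, precisely because $\ksrev$ forgets the source component of the transition states that the quotient collapses. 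So your proof not only reaches the stated result but also documents that the superficially symmetric route suggested by the paper would break down at its key lemma; the only ingredients you use beyond the paper's are the standard existential lifting of transitions to the quotient, which is the same level of rigour the paper itself employs in the proofs of Lemmas~\ref{lem:lts_bisim} and~\ref{lem:minimal}.
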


\begin{theorem}
\label{th:lts2ks_stut_minimal}
We have $\minLTS{\dsbbisim} = \ksrev \circ \minKS{\stuttering} \circ \ks$.
\qed
\end{theorem}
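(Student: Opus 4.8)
The plan is to follow the template of Theorem~\ref{th:ks2lts_stut_minimal}, with one twist. The naive counterpart of Lemma~\ref{lem:lts_stut}, namely $\minKS{\stuttering} \circ \ks \circ \minLTS{\dsbbisim} = \ks \circ \minLTS{\dsbbisim}$, \emph{fails}: a $\dsbbisim$-minimal $T$ may still contain transitions $s_1 \ltstrans{a} t$ and $s_2 \ltstrans{a} t$ with $s_1 \neq s_2$, and then the states $(s_1,a,t)$ and $(s_2,a,t)$ of $\ks(T)$ share the label $\{a\}$ and their unique successor $t$, hence are stuttering equivalent, so $\ks(T)$ is not $\stuttering$-minimal. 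What saves the argument is that $\ksrev$ is blind to this duplication: from a state $s$ labelled $\{\bot\}$ it reads off only which labels $a$ and which $\{\bot\}$-labelled states $s'$ occur in a pattern $s \kstrans x \kstrans s'$ with $a \in L(x)$, so collapsing two ``transition states'' that agree on label and on successor does not change $\ksrev$ of the result.

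First I would establish the obvious $\ks$-counterpart of Lemma~\ref{lem:minimal}: if $\minLTS{\dsbbisim} = \ksrev \circ \minKS{\stuttering} \circ \ks \circ \minLTS{\dsbbisim}$, then $\minLTS{\dsbbisim} = \ksrev \circ \minKS{\stuttering} \circ \ks$. Its proof is the literal dual of the one given for Lemma~\ref{lem:minimal}: it uses that $\ks$ preserves and reflects $\dsbbisim$ through $\stuttering$ (De Nicola and Vaandrager~\cite{DBLP:journals/jacm/NicolaV95}), that $\ks(T)$ is a reversible Kripke Structure and hence, by Proposition~\ref{prop:reversibility2}, so is $\minKS{\stuttering}(\ks(T))$, and that $\ksrev$ is functional. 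With this in hand it suffices to prove $\ksrev(\minKS{\stuttering}(\ks(T))) = T$ for every $\dsbbisim$-minimal $T$.

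So suppose $T = \tuple{S,\act,\ltstrans{}}$ is $\dsbbisim$-minimal. The second step is to pin down the stuttering equivalence classes of $\ks(T)$. Its states are the original states $S$, all labelled $\{\bot\}$, and the transition states $(s,a,t)$ with $a \neq \tau$, labelled $\{a\}$ with $a \neq \bot$; as $\stuttering$ relates only equally-labelled states, no original state is $\stuttering$-equivalent to a transition state. On original states $\stuttering$ in $\ks(T)$ coincides with $\dsbbisim$ in $T$, so minimality of $T$ makes every original state a singleton class. For transition states I would verify $(s,a,t) \stuttering (s',a',t')$ iff $a = a'$ and $t \stuttering t'$: the sole move of $(s,a,t)$ is $(s,a,t) \kstrans t$, and since the sole successor $t'$ of $(s',a',t')$ has label $\{\bot\} \neq \{a\}$, the match must be the single step $(s',a',t') \kstrans t'$ with $t \stuttering t'$; conversely the symmetric closure of the identity, augmented with the pair $((s,a,t),(s',a,t'))$ and a witness for $t \stuttering t'$, does the job, and no divergence arises because successors of transition states are $\{\bot\}$-labelled. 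Minimality of $T$ then forces $t = t'$, so the transition states of $\ks(T)$ fall exactly into classes $\langle a,t\rangle := \{(s,a,t) \mid s \ltstrans{a} t\}$ indexed by the pairs $(a,t)$ with $a \neq \tau$.

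From this I would read off that $\minKS{\stuttering}(\ks(T))$ has $\{\bot\}$-labelled states in bijection with $S$, states $\langle a,t\rangle$ labelled $\{a\}$, and --- the only classes with more than one element being the transition classes, whose members carry no edges to one another --- exactly the projected edges $s \kstrans \langle a,t\rangle$ for $s \ltstrans{a} t$, $\langle a,t\rangle \kstrans t$, and $s \kstrans t$ for $s \ltstrans{\tau} t$. Feeding this Kripke Structure to $\ksrev$ yields a Labelled Transition System on $S$ in which $s \ltstrans{\tau} s'$ holds iff $s \kstrans s'$ in $\minKS{\stuttering}(\ks(T))$ between $\{\bot\}$-labelled states, that is iff $s \ltstrans{\tau} s'$ in $T$, and $s \ltstrans{a} s'$ for $a \neq \tau$ holds iff $s \kstrans \langle a,t\rangle \kstrans s'$ for some $t$, that is iff $s \ltstrans{a} s'$ in $T$. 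Hence $\ksrev(\minKS{\stuttering}(\ks(T))) = T$, and the $\ks$-counterpart of Lemma~\ref{lem:minimal} closes the proof. The step I expect to demand most care is the claim that the $\stuttering$-quotient of $\ks(T)$ carries exactly those projected edges --- in particular that no divergence-induced self-loops are added at the singleton original-state classes; this is precisely where $\dsbbisim$-minimality of $T$ is needed, since it ensures that the only classes that are genuinely merged are the edge-free transition classes $\langle a,t\rangle$.\qed
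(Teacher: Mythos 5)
The paper gives no proof of this theorem at all --- it is stated ``without further elaboration'', the implicit suggestion being that it mirrors the development leading to Theorem~\ref{th:ks2lts_stut_minimal}. Your proposal is correct, and its main merit is to point out that the mirroring is not literal: the dual of Lemma~\ref{lem:lts_stut}, which would assert that $\ks$ maps $\dsbbisim$-minimal systems to $\stuttering$-minimal ones, is false, since two transitions $s_1 \ltstrans{a} t$ and $s_2 \ltstrans{a} t$ of a minimal $T$ yield stuttering-equivalent states $(s_1,a,t)$ and $(s_2,a,t)$ in $\ks(T)$ (the same remark applies to the dual of Lemma~\ref{lem:lts_bisim} and hence to Theorem~\ref{th:lts2ks_bisim_minimal}). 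Your replacement argument --- keep the dual of Lemma~\ref{lem:minimal}, which does dualise verbatim using Proposition~\ref{prop:reversibility2}, functionality of $\ksrev$ and the De Nicola--Vaandrager correspondence, and then show directly that for $\dsbbisim$-minimal $T$ the stuttering classes of $\ks(T)$ are the singletons $\{s\}$, $s \in S$, together with the classes $\{(s,a,t) \mid s \ltstrans{a} t\}$, whose collapse is invisible to $\ksrev$ --- is sound, and it is essentially what any proof of the omitted result must contain: minimality of $T$ is needed only to keep the $\{\bot\}$-labelled states apart and to force $t = t'$ inside merged transition classes, while $\ksrev$ erases the duplication of transition states. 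The one step you should spell out completely is the description of $\minKS{\stuttering}(\ks(T))$: the paper never fixes the quotient construction, and your claim that its edges are exactly the projected ones (with a self-loop at a singleton class $\{s\}$ iff $s \kstrans s$ in $\ks(T)$, i.e.\ iff $s \ltstrans{\tau} s$ in $T$, and no self-loops at the edge-free transition classes) is exactly where the divergence-sensitivity of the quotient is invoked; making that definition explicit closes the only informal gap, which is no larger than the level of informality of the paper itself.
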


\section{Conclusions}\label{Sect::conc}

Our results in Section~\ref{Sect:preservations} naturally extend
the fundamental results obtained by De Nicola and Vaandrager
in~\cite{DBLP:conf/litp/NicolaV90,DBLP:journals/jacm/NicolaV95}.
In a sense, we can now state that their embeddings
$\ks$ and $\lts$ are canonical for four commonly used equivalence
relations.

While the stated embeddings have traditionally been used to come to
results about the correspondence between logics, the question whether
they support minimisation modulo behavioural equivalences was never
answered. Thus, in addition to the above stated results, we proved that
indeed the embeddings $\ks$ and $\lts$ can serve as basic tools in the
problem of minimising modulo a behavioural equivalence relation. To this
end, we defined inverses of the embeddings to compensate for the fact that
composing $\ks$ and $\lts$ does not lead to transition systems that are
comparable (in whatever sense) to the one before applying the embeddings.
The latter results are clearly interesting from a practical perspective,
allowing one to take full advantage of state-of-the-art minimisation
tools available for one computational model, when minimising in the other.

Our minimisation results are for two of the most commonly used equivalence
relations that are, arguably, still efficiently computable. However,
we do intend to extend our results also in the direction of (completed)
trace equivalence and similarity. As a slightly more esoteric research
topic, one could look for improving on the embedding $\lts$, as, compared
to the embedding $\ks$, it introduces more ``noise''. For instance, it
yields Labelled Transition Systems that have runs that cannot sensibly
be related to paths in the original Kripke Structure.

\bibliographystyle{plain}
\bibliography{lit}

\cleardoublepage
\appendix
\section{Proofs for Section \ref{Sect:preservations}}

\subsection{Proof of Theorem \ref{th:lts_similarity}}
\label{pf:th:lts_similarity}

Consider states $s$ and $s'$ in a Kripke structure $\tuple{S,A,\kstrans,L\rangle}$. Assume that $K \models s \simm s'$ and that this is witnessed by the simulation relations $B$ and $C$ with $(s,s') \in B$ and $(s',s) \in C$.
We show that, with respect to the Labelled Transition System associated with the Kripke structure, the relation $B' = B \cup \{ (\shadow{s},\shadow{s}') \mid (s,s') \in B \}$ is a simulation relation with  $(s, s') \in B$. In a similar way a simulation relation $C'$ with $(s',s) \in C'$ can be defined. This part is omitted.

First consider an arbitrary pair $(\shadow{s},\shadow{s}') \in B'$. This is due to the fact that $(s,s') \in B$. By construction the only transitions for $\shadow{s}$ and $\shadow{s}'$ are $\shadow{s} \ltstrans{L(s)} s$ and $\shadow{s}' \ltstrans{L(s')} s'$. From the fact that $(s,s') \in B$ it follows that $L(s) = L(s')$. This suffices to satisfy all transfer conditions for the pair $(\shadow{s},\shadow{s}')$.

Next, consider an arbitrary pair $(s,s') \in B'$. This is due to the fact that $(s,s') \in B$. Let us consider all transitions from $s$.
\begin{itemize}
\item $s \ltstrans{\bot} \shadow{s}$. Since $s' \in S$ we have $s' \ltstrans{\bot} \shadow{s}'$. Since $(s,s') \in B$ it also follows that $(\shadow{s},\shadow{s}') \in B'$.

\item $s\ltstrans{\tau} t$ for some $t \in S$ such that $s \kstrans t$ and $L(s) = L(t)$.
Since $(s,s') \in B$ and $B$ is a simulation, it follows that $L(s)=L(s')$ and $s' \kstrans t'$ for some $t' \in S$ such that $(t,t') \in B$. Since $(t,t') \in B$ we have $L(t) = L(t')$, and therefore $L(s') = L(t')$ as well. Thus, by construction $s' \ltstrans{\tau} t'$. From $(t,t') \in B$ we obtain $(t,t') \in B'$.

\item $s\ltstrans{L(t)} t$ for some $t\in S$ such that $s \kstrans t$ and $L(s) \neq L(t)$.
Since $(s,s') \in B$ and $B$ is a simulation, it follows that $L(s)=L(s')$ and $s' \kstrans t'$ for some $t' \in S$ such that $(t,t') \in B$. Since $(t,t') \in B$ we have $L(t) = L(t')$, and therefore $L(s') \neq L(t')$ as well. Thus, by construction $s' \ltstrans{L(t')} t'$. From $(t,t') \in B$ we obtain $(t,t') \in B'$. \qed
\end{itemize}

\subsection{Proof of Theorem \ref{th:ks_similarity}}
\label{pf:th:ks_similarity}

Consider states $s$ and $s'$ in a Labelled Transition System $T=\tuple{
S,A,\ltstrans{}}$. Assume that $T \models s \simm s'$ and that this
is witnessed by the simulations $B$ and $C$ with $(s,s') \in B$
and $(s',s)\in C$. We show that, with respect to the Kripke structure
associated with the Labelled Transition System, the relation $B' = B \cup
\{ ((s,a,t),(s',a,t')) \mid (s,a,t),(s',a,t') \in S' \land (s,s'),(t,t')
\in B \}$ is a simulation relation with $(s,s') \in B$. Here $S'$ is
the set of states of that Kripke structure as prescribed by Definition
\ref{translationks}. Similarly, a simulation relation $C'$ with $(s',s)
\in C$ can be defined.

First consider a pair $(s,s')$ that is present in $B'$ due to its presence in $B$. Since $s,s'\in S$ we have by definition that $L(s) = \{\bot\} = L(s')$.
We consider all possible transitions from $s$. By construction the only possible transitions for $s$ are the following.

\begin{itemize}
\item $s \kstrans t$ for some $t \in S$ such that $s \ltstrans{\tau} t$. Since $(s,s') \in B$ and $B$ is a simulation relation, we have $s' \ltstrans{\tau} t'$ for some $t' \in S$ such that $(s',t') \in B$. By construction then also $s' \kstrans t'$ and $(s',t') \in B'$.
\item $s \kstrans (s,a,t)$ for some $a \in A$ and $t \in S$ such that $a \neq \tau$ and $s \ltstrans{a} t$. Since $(s,s') \in B$ and $B$ is a simulation relation, we have $s' \ltstrans{a} t'$ for some $t' \in S$ such that $(t,t') \in B$. By construction then also $s' \kstrans (s',a,t')$. Note that $((s,a,t),(s',a,t')) \in B'$ since $(s,s') \in B$ and $(t,t')\in B$.
\end{itemize}

Next, consider a pair $((s,a,t),(s',a,t'))$ that is present in $B'$ due to presence of both $(s,s')$ and $(t,t')$ in $B$. By construction $L((s,a,t)) = \{ a \} = L((s',a,t'))$. Let us consider all transitions from $(s,a,t)$.
The only possible transition is $(s,a,t) \kstrans t$. Since $(s,s') \in B$, $s \ltstrans{a} t$ and $B$ is a simulation relation it follows that $s' \ltstrans{a} t'$ for some $t' \in S$ such that $(t,t') \in B$. By construction then also $(s',a,t') \kstrans t'$ and $(t,t') \in B'$. \qed

\subsection{Proof of Theorem \ref{th:lts_bisimilarity}}
\label{pf:th:lts_bisimilarity}

Consider states $s$ and $s'$ in a Kripke structure $\tuple{S,A,\kstrans,L}$. Assume that $K \models s \bisim s'$ and that this is witnessed by the bisimulation relation $B$ with $(s,s') \in B$. Thus $B$ is a simulation relation with $(s,s') \in B$ and with $(s',s) \in B$.
We define the relation $B' = B \cup \{ (\shadow{s},\shadow{s}') \mid (s,s') \in B \}$. It follows from the proof of Theorem \ref{th:lts_similarity} that $B'$ is a simulation relation for $(s,s')$ and for $(s',s)$. \qed

\subsection{Proof of Theorem \ref{th:ks_traceeq}}
\label{pf:th:ks_traceeq}

Before we prove Theorem \ref{th:ks_traceeq} in this section, we establish
an intermediate result concerning the relation between paths ---and
their prefixes--- of a Kripke Structure and the subset of bare runs,
defined below ---and their prefixes--- in the \LTS-embedding of the same
Kripke Structure.

\begin{definition}[Bare run] A run $\rho$ is said
to be a \emph{bare run} iff the labels occurring on the run differ from
$\bot$. The set of bare runs starting in a state $s$, for $s \ltstrans{\bot}$
is given by the set $\bareruns{s}$.
\end{definition}
\newcommand{\prefruns}[1]{\ensuremath{\mathsf{Runs}^p(#1)}}
\newcommand{\prefbareruns}[1]{\ensuremath{\mathsf{Runs}^p_b(#1)}}
\newcommand{\bare}[1]{\ensuremath{\beta(#1)}}

Let $T = \tuple{S, \act, \ltstrans{}}$.  Let $\prefruns{s} \subseteq S
(\act\ S)^*$ be the set of prefixes of runs starting in states $s \in S$;
likewise, $\prefbareruns{s} \subseteq S
((\act\setminus\{\bot\})\ S)^*$ is the set of prefixes of bare runs
starting in $s \in S$ satisfying $s \ltstrans{\bot}$.

Given a (finite) trace $\sigma \in \traces{\prefruns{s}}$, we write
$s \ltstrans{\sigma} t$ if there is some $\rho_p \in \prefruns{s}$
ending in state $t$ such that $\sigma = \trace{\rho_p}$.

\begin{definition} Let $\sigma \in \traces{\prefruns{s}}$.
Denote the sequence $\bare{\sigma}$, obtained from $\sigma$
by deleting
\begin{itemize}
\item all subsequences of the form $\bot\ l$;
\item $\bot$ in case $\sigma$ ends
as such.
\end{itemize}
\end{definition}
It is not hard to see that $\bare{\sigma} \in \traces{\prefruns{s}}$
implies $\bare{\sigma} \in \traces{\prefbareruns{s}}$, \ie,
any trace $\bare{\sigma}$ is generated by a bare run.
\begin{lemma}
\label{lem:bare_vs_ordinary}
For all $\sigma \in \traces{\prefruns{s}}$, $s \in S$ such that
$s \ltstrans{\bot}$ and all $t \in S \cup \shadow{S}$, we have
$s \xrightarrow{\sigma} t$ iff
\begin{enumerate}
\item
$s \xrightarrow{\bare{\sigma}} t$ and
$\forall \sigma': \sigma \not= \sigma'\bot$, \emph{or}

\item $s \xrightarrow{\bare{\sigma}\bot} t$ and
$\exists \sigma': \sigma = \sigma'\bot$.

\end{enumerate}

\end{lemma}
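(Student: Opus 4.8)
The plan is to prove the lemma by induction on the length of $\sigma$, after first pinning down the rigid shape that runs of $\lts(K)$ must have. I would record the following structural observations about $\lts(K)$: a state $u$ has an outgoing $\bot$-transition iff $u \in S$, and then this transition is uniquely $u \ltstrans{\bot} \shadow u$; dually, the only transition leaving a shadow state $\shadow u$ is $\shadow u \ltstrans{L(u)} u$, which returns to $u$. Consequently, along any run an occurrence of $\bot$ that is not the last action of the prefix under consideration is immediately followed by $L(u)$ and comes back to the very state $u$ it was taken from; thus $\bot$ appears only inside ``detour blocks'' $\bot\ L(u)$ that begin and end in the same state $u$, or as a single trailing $\bot$. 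From this it follows that, for realizable $\sigma$, the word $\bare\sigma$ contains no occurrence of $\bot$ at all --- each $\bot$ is part of a detour block or is trailing, all of which $\bare{\cdot}$ deletes, and no fresh $\bot$ followed by a label can be created since that would require a forbidden $\bot\ \bot$ --- so in particular any run realizing $\bare\sigma$ visits only states of $S$. I would also use the auxiliary fact that whenever a realizable trace-prefix from $s$ arrives at a state in $S$ (which is the case unless the prefix ends in $\bot$), the $L$-value of that state is already determined by the prefix: it equals the last letter of the prefix lying in $2^{\AP}$, or $L(s)$ if there is none.

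With these preliminaries, I would phrase the inductive claim as a set equality: $\{\,t \mid s \ltstrans{\sigma} t\,\}$ equals $\{\,t \mid s \ltstrans{\bare\sigma} t\,\}$ when $\sigma$ does not end in $\bot$, and equals $\{\,t \mid s \ltstrans{\bare\sigma\,\bot} t\,\}$ otherwise; this captures both directions of the stated biconditional at once. The base case $\sigma = \varepsilon$ is immediate. For the inductive step I would peel off the first action $a$ of $\sigma$; since $\sigma \in \traces{\prefruns{s}}$ and $s$ is real, $a \in \{\bot,\tau\} \cup 2^{\AP}$. If $a = \bot$, then either $\sigma = \bot$ (so the only reachable $t$ is $\shadow s$, $\bare\sigma = \varepsilon$, and the trailing-$\bot$ clause applies), or by the structural observations $\sigma = \bot\ L(s)\ \sigma_2$ with $\sigma_2 \in \traces{\prefruns{s}}$ and the two forced opening steps return to $s$; here $\bare\sigma = \bare{\sigma_2}$ and $\sigma$ ends in $\bot$ iff $\sigma_2$ does, so the induction hypothesis on $\sigma_2$ (shorter by two) closes the case. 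If $a \ne \bot$, every run realizing $\sigma$ first performs $s \ltstrans{a} t_1$ into some real successor $t_1$ and then realizes the tail $\sigma_1$, so the left-hand reachable set is the union over the $a$-successors $t_1$ of $\{\,t \mid t_1 \ltstrans{\sigma_1} t\,\}$; using $\bare\sigma = a\,\bare{\sigma_1}$ (legitimate precisely because $a \ne \bot$), and that $\sigma$ ends in $\bot$ iff $\sigma_1$ does, the induction hypothesis applied at each such $t_1$ --- together with the observation explained below that handles successors from which $\sigma_1$ is not realizable --- yields the desired equality.

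I expect the genuinely delicate point to be exactly this $a \ne \bot$ case, because of nondeterminism: for an $a$-successor $t_1$ from which $\sigma_1$ is \emph{not} realizable, I must still show that $\bare{\sigma_1}$ is not realizable from $t_1$ either, so that the corresponding summand vanishes on both sides. This is where the ``$L$-value determined by the trace-prefix'' fact is needed: any detour $\bot\ m$ that $\sigma_1$ carries over $\bare{\sigma_1}$ must, in any realizing run, be inserted at a real state whose $L$-value is exactly that trace-determined value, so it can equally well be re-inserted into \emph{any} run realizing $\bare{\sigma_1}$ from $t_1$ (a real state can always perform the detour $\bot\ L(u)$), producing a run that realizes $\sigma_1$ and contradicting non-realizability. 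The remaining obligations --- checking $\bare{\bot\ L(s)\ \sigma_2} = \bare{\sigma_2}$, $\bare{a\,\sigma_1} = a\,\bare{\sigma_1}$ for $a \ne \bot$, and that the ``ends in $\bot$'' status is transported correctly under peeling --- are routine manipulations of the operator $\bare{\cdot}$.
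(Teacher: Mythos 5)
Your proposal is correct and takes essentially the same route as the paper: the paper's entire argument for this lemma is the one-line ``by induction on the length of $\sigma$''. Your induction on the length of $\sigma$, peeling the first action and using the rigid structure of $\lts(K)$ (unique $\bot$-detours through shadow states and trace-determined labels, which in particular make all $a$-successors of $s$ carry the same label so the detours can be re-inserted at nondeterministic successors) is a sound elaboration of exactly that argument.
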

\begin{proof} By induction on the length of $\sigma$.
\end{proof}
The above lemma firmly establishes a connection between a trace $\sigma$
of a run starting in a state $s$ and the trace $\bare{\sigma}$. Intuitively,
as bare runs only pass through states that can perform a $\bot$
transition, any trace generated by a bare run can be ``pumped up'' to generate
an arbitrary trace that can lead to the same state as its corresponding
bare run, simply by following the loop $\bot\ l$, for some action label $l$.

\begin{lemma}
\label{lem:bareruns1}
Let $K = \tuple{S, \AP, \kstrans,  L}$ be a Kripke Structure.
Then
$s_0\ s_1\ \cdots \in \paths{s_0}$ implies
$s_0\ l_0\ s_1\ l_1\ \cdots \in \bareruns{s_0}$ in $\lts(K)$
for precisely one
infinite sequence $l_0\ l_1\ \cdots$. \emph{Vice versa,}
if for some infinite sequence $l_0\ l_1\ \cdots$, we have
$s_0\ l_0\ s_1\ l_1\ \cdots \in \bareruns{s_0}$ in $\lts(K)$,
then $s_0\ s_1\ \cdots \in \paths{s_0}$.
\end{lemma}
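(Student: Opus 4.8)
The plan is to establish a tight correspondence between a single step $s_i \kstrans s_{i+1}$ in $K$ and a single step $s_i \ltstrans{l_i} s_{i+1}$ in $\lts(K)$, and then to lift this to infinite paths/runs by a straightforward induction. Recall that in $\lts(K)$ the only transitions leaving an original state $s$ (as opposed to a shadow state $\shadow{s}$) are $s \ltstrans{\bot} \shadow{s}$ and the transitions of the form $s \ltstrans{\tau} t$ (when $s \kstrans t$ and $L(s) = L(t)$) or $s \ltstrans{L(t)} t$ (when $s \kstrans t$ and $L(s) \neq L(t)$). Since a bare run by definition never uses the label $\bot$, and since the $\bot$-transition is the only one leading into a shadow state, a bare run starting in an original state $s_0$ never visits a shadow state; hence it consists entirely of transitions of the second and third kinds above. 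The first key observation is therefore: for original states $s, t$, there is a step $s \ltstrans{l} t$ with $l \neq \bot$ in $\lts(K)$ if and only if $s \kstrans t$ in $K$, and in that case the label $l$ is \emph{uniquely determined} — it is $\tau$ if $L(s) = L(t)$ and $L(t)$ otherwise.

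For the forward direction, suppose $s_0\ s_1\ \cdots \in \paths{s_0}$, so $s_i \kstrans s_{i+1}$ for all $i$. Define $l_i = \tau$ if $L(s_i) = L(s_{i+1})$ and $l_i = L(s_{i+1})$ otherwise; by the construction rules of $\lts$ each $s_i \ltstrans{l_i} s_{i+1}$ holds, none of the $l_i$ equals $\bot$, and $s_0 \ltstrans{\bot} \shadow{s_0}$ so $s_0$ qualifies as a start state for a bare run. Thus $s_0\ l_0\ s_1\ l_1\ \cdots \in \bareruns{s_0}$. Uniqueness of the sequence $l_0\ l_1\ \cdots$ follows from the first key observation: any bare run over the state sequence $s_0\ s_1\ \cdots$ must, step by step, use exactly these labels, since for a given pair $(s_i, s_{i+1})$ of original states the non-$\bot$ label on a transition between them is forced. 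For the converse, suppose $s_0\ l_0\ s_1\ l_1\ \cdots \in \bareruns{s_0}$ in $\lts(K)$ for some sequence $l_0\ l_1\ \cdots$; since no $l_i$ is $\bot$, by induction on the index one shows every $s_i$ is an original state (the first non-original state would have to be entered by a $\bot$-transition, contradicting bareness), and then each step $s_i \ltstrans{l_i} s_{i+1}$ must be one of the $\tau$- or $L(t)$-labelled rules, each of which has $s_i \kstrans s_{i+1}$ as a premise. Hence $s_0\ s_1\ \cdots \in \paths{s_0}$.

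The only point requiring a little care — and the main (minor) obstacle — is the bookkeeping about shadow states: one must argue that a bare run starting in an original state stays within original states, so that the step-wise correspondence can be applied uniformly. This is handled by the observation that the sole transition into a shadow state $\shadow{s}$ is labelled $\bot$, and bare runs exclude $\bot$; a one-line induction then closes the gap. Everything else is a direct unwinding of Definition~\ref{translationlts}.
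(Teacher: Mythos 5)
Your proof is correct and follows essentially the paper's approach: the paper dismisses this lemma with ``follows by definition of $\lts$'', and your argument is precisely the detailed unwinding of that definition (forced labels between original states, plus the observation that bare runs cannot enter shadow states since the only transitions into them carry $\bot$). No gaps.
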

\begin{proof}
Follows by definition of $\lts$.
\end{proof}
Informally, the above lemma states that for each path in a Kripke
Structure, there is a unique matching bare run in its \LTS embedding,
and, \emph{vice versa}, for every bare run in its \LTS embedding, there
is a unique path in the Kripke Structure.

We next establish that the embedding $\lts$ is such that for the trace
equivalence of two states in a Labelled Transition System resulting from
the embedding $\lts$, it suffices to prove that the traces of all bare runs
coincide.  Formally, we have:

\begin{lemma}
\label{lem:bareruns2}
Let $K = \tuple{S, AP, \kstrans, L}$ be a Kripke Structure. Let $s,
s' \in S$, with $L(s) = L(s')$ be such that $\traces{\bareruns{s}} =
\traces{\bareruns{s'}}$ in $\lts(K)$. Then also $\traces{\runs{s}} =
\traces{\runs{s'}}$ in $\lts(K)$.

\end{lemma}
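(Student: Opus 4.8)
The statement to prove is Lemma~\ref{lem:bareruns2}: if $L(s) = L(s')$ and $\traces{\bareruns{s}} = \traces{\bareruns{s'}}$ in $\lts(K)$, then also $\traces{\runs{s}} = \traces{\runs{s'}}$ in $\lts(K)$. My plan is to argue that every run starting in $s$ is, after abstracting away the bookkeeping introduced by $\bot$, essentially a bare run, so that the full trace set is completely determined by the bare-run trace set together with the common label $L(s)$. I would prove the inclusion $\traces{\runs{s}} \subseteq \traces{\runs{s'}}$ and appeal to symmetry for the converse.

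\textbf{Key steps.} First I would analyse the shape of an arbitrary run $\rho \in \runs{s}$ in $\lts(K)$. By Definition~\ref{translationlts}, from any state $s \in S$ the possible transitions are $s \ltstrans{\bot} \shadow{s}$, $s \ltstrans{\tau} t$, and $s \ltstrans{L(t)} t$ (for successors $t$ of $s$ in $K$); from a shadow state $\shadow{s}$ the only transition is $\shadow{s} \ltstrans{L(s)} s$. Hence a run alternates between ``genuine'' states of $S$ and visits to shadow states, and whenever it enters a shadow state $\shadow{t}$ it must immediately return to $t$ via the label $L(t)$. Consequently the trace $\sigma = \trace{\rho}$, when we strike out every subsequence of the form $\bot\, L(t)$ (and a possible trailing $\bot$ if the run is about to loop through a shadow state forever — but note runs are infinite, so I must also handle the case where the run eventually stays in the two-state cycle $t \ltstrans{\bot} \shadow{t} \ltstrans{L(t)} t$ forever, giving the suffix $(\bot\, L(t))^\omega$), has a well-defined ``bare'' version $\bare\sigma$. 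The second step is the correspondence: using Lemma~\ref{lem:bare_vs_ordinary} (extended to the infinite case, or just re-derived directly for infinite runs), the infinite-run trace $\sigma$ is obtained from a bare-run trace $\tau \in \traces{\bareruns{s}}$ by inserting, at each state, finitely or infinitely many $\bot\, l$ blocks where $l$ is forced to be $L(t)$ at the state $t$ where the insertion happens. The third step is then: given $\rho \in \runs{s}$ with trace $\sigma$, extract its underlying bare trace $\tau$; since $\traces{\bareruns{s}} = \traces{\bareruns{s'}}$, there is a bare run from $s'$ with the same trace $\tau$ and — crucially by Lemma~\ref{lem:bareruns1} — passing through a matching sequence of states $s' = s'_0\ s'_1\ \cdots$ with $L(s'_i) = L(s_i)$ for the corresponding $s_i$; now re-insert exactly the same pattern of $\bot\, l$ blocks, which is legal in $\lts(K)$ from $s'$ because the forced label at the $i$-th state is $L(s'_i) = L(s_i) = l$. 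This produces a run from $s'$ with trace $\sigma$. The hypothesis $L(s) = L(s')$ is only needed to cover the degenerate run that begins $s \ltstrans{\bot} \shadow{s} \ltstrans{L(s)} \cdots$, i.e.\ the edge case of a run whose bare part is empty/trivial at the very first step.

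\textbf{The main obstacle.} The delicate point is matching not just traces of bare runs but the \emph{states visited along them}, since the re-insertion of $\bot\, l$ blocks requires that at the $i$-th position of the run from $s'$ the atomic-proposition label agrees with that at the $i$-th position of the run from $s$. Lemma~\ref{lem:bareruns1} gives the path/bare-run bijection per individual run, but I need that whenever a bare trace $\tau$ is realised from both $s$ and $s'$, \emph{some} pair of realising runs has pointwise-matching state labels. The label of the $i$-th state of a bare run in $\lts(K)$ is recoverable from the trace prefix (roughly, it is the most recent non-$\bot$ action label, or $L(s)$ at the start), so in fact the state-label sequence is a function of $\tau$ alone — this is what makes the argument go through, and spelling out that functional dependence carefully (including the initial segment, where $L(s)=L(s')$ is used) is the one step I expect to need genuine care rather than routine bookkeeping. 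A secondary nuisance is being rigorous about infinite runs that end in the shadow-cycle $(\bot\,L(t))^\omega$: these have the special property that the bare run they come from is finite-looking (it ``gets stuck'' cycling), and I must confirm such a run from $s$ can be mirrored from $s'$ — which again follows because reaching $t$ via a bare prefix from $s$ corresponds, by Lemma~\ref{lem:bareruns1} applied to prefixes together with the bare-trace hypothesis, to reaching some $t'$ with $L(t') = L(t)$ from $s'$, whence the same cycle is available.
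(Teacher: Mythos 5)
Your proposal is correct and rests on the same two observations that underpin the paper's (very terse) argument: a run of $\lts(K)$ that starts in a state of $S$ is just a bare run into which blocks $\bot\,L(t)$ have been inserted at visits to states $t$ (possibly with an infinite tail $(\bot\,L(t))^\omega$), and the label of the state reached at any position of a bare run is a function of the bare trace read so far together with the label of the initial state --- which is exactly where the hypothesis $L(s)=L(s')$ enters. The route differs in presentation: the paper argues by induction on the length of finite trace prefixes, relating a prefix $\sigma$ to its bare version via Lemma~\ref{lem:bare_vs_ordinary}, and then passes ``in the limit'' to infinite traces; you instead take an arbitrary infinite run from $s$, strip it to its (finite or infinite) bare part, realise the same bare trace from $s'$ using the hypothesis, and re-insert the same blocks, which is legal because the pointwise state labels agree. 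Your version buys something: it constructs the matching infinite run outright, so it avoids the paper's informal limit step (which, as stated, silently needs either such a direct construction or a compactness-style argument), and it makes explicit two boundary cases the paper leaves implicit, namely insertions during the initial $\tau$-only segment --- where $L(s)=L(s')$ is genuinely used throughout, not only at the very first step, as you acknowledge in your ``obstacle'' paragraph --- and runs ending in the shadow cycle, whose bare part is finite and is handled by extending bare prefixes to bare runs using totality of $\kstrans$. One small slip to repair when writing it out: the state label at a position of a bare run is determined by the most recent non-$\tau$ action label (not non-$\bot$; bare runs contain no $\bot$ at all, and a $\tau$ step carries no information beyond the label being unchanged), defaulting to $L(s)$ when no non-$\tau$ label has yet occurred.
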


\begin{proof} By induction on the length of the traces.
\end{proof}
Since all runs are infinite, in the limit, any trace $\sigma \in
\traces{\runs{s}}$ is also in the set $\traces{\runs{s'}}$.

\begin{proof}[Theorem~\ref{th:ks_traceeq}]
Let states $s,s' \in S$ in a Kripke Structure be trace equivalent.
Suppose $\pi = s_0\ s_1\ s_2\ \ldots \in \paths{s}$ and $\pi' =
s_0'\ s_1'\ s_2'\ \ldots \in \paths{s'}$ are such that $\trace{\pi} =
\trace{\pi'}$. Because of Lemma~\ref{lem:bareruns1}, we find that there
must be unique $\rho \in \bareruns{s}$ and $\rho' \in \bareruns{s'}$
passing through the exact same states as the paths $\pi$ and $\pi'$
respectively. That is:
$$
\left \{
\begin{array}{ll}
\rho = s_0\ l_0\ s_1\ l_1\ s_2\ \ldots \\
\rho' = s_0'\ l_0'\ s_1'\ l_1'\ s_2'\ \ldots
\end{array}
\right .
$$
By construction of $\lts$, we have $l_i = \tau$ if
$L(s_i) = L(s_{i+1})$ and $l_i = L(s_{i+1})$ otherwise (and similarly for
$l_i'$). But from
the fact that $\trace{\pi} = \trace{\pi'}$, we find that $L(s_i)
= L(s'_i)$ for all $i$. Hence, also $l_i = l_i'$ for all $i$. But
this means that $\trace{\rho} = \trace{\rho'}$. Appealing to
Lemma~\ref{lem:bareruns1}, \emph{all} bare runs correspond to paths
in the Kripke Structure. Hence, we find that
$\traces{\bareruns{s}} = \traces{\bareruns{s'}}$. Since $L(s) =
L(s')$, Lemma~\ref{lem:bareruns2} yields the desired conclusion that
$\traces{\runs{s}} = \traces{\runs{s'}}$.

For the other direction, we assume that states $s,s' \in S$ are
trace equivalent in $\lts(K)$. In short, this means that the set of
bare runs starting in $s$ and $s'$ produce the same traces. Let
$\rho = s_0\ l_0\ s_1\ l_1\ \ldots$ be a bare run starting in $s$,
and $\rho' = s_0'\ l_0'\ s_1'\ l_1'\ \ldots$ be a bare run starting in
$s'$, such that $l_i = l_i'$. Using Lemma~\ref{lem:bareruns1}, we find
that there are unique matching paths $\pi = s_0\ s_1\ s_2\ \ldots$
and $\pi' = s_0'\ s_1'\ s_2'\ \ldots$.
By construction of $\lts$, we find that
this implies that $L(s_i) = L(s_i')$ for all $i$ satisfying that $i \ge j$
for the least $j$ such that $l_j \not= \tau$. For all $0 <i < j$, we
observe that $l_i = l_{i-1} = \tau$, which can only be if
$L(s_i) = L(s_{i-1})$ for $0 < i < j$. Likewise, $L(s_i') = L(s'_{i-1})$.
Since \emph{all} traces starting in $s$ and $s'$ are the same in
$\lts(K)$, also $\trace{s\ \bot\ L(s)\ \rho} = \trace{s'\ \bot\ L(s')\ \rho'}$,
which can only be the case when $L(s) = L(s')$. But then also
$L(s_i) = L(s'_i)$ for all $i \ge 0$. Hence, $\trace{\pi} = \trace{\pi'}$.
Since all paths starting in
$s$ and $s'$ correspond to unique bare runs starting in $s$ and $s'$
in $\lts(K)$, this means we have considered all possible paths and
therefore all possible traces.
\end{proof}

\section{Proofs for Section \ref{sec:minimisations}}

\subsection{Proof of Proposition \ref{prop:lts_inverse}}
\label{pf:prop:lts_inverse}

This theorem follows directly from the definitions.
Consider arbitrary Kripke structure $K = \tuple{S,AP,\kstrans,L}$. Let $\lts(K) = T= \tuple{ S',A',\ltstrans{}}$ and $\ltsrev(T) = K'=\tuple{S'',AP',\kstrans',L'}$. We will show that $S''=S$, $AP'=AP$, $\kstrans'=\kstrans$ and $L'=L$, thus establishing the isomorphism of $K$ and $K'$.

From the definition of $\lts$ (applied to $K$) it follows that
\begin{itemize}
\item $S'= S \cup \{ \shadow{s} \mid s \in S \}$;
\item $A' = 2^{AP} \cup \{ \bot \}$;
\item $\ltstrans = \begin{array}[t]{l}
\{ (s,\bot,\shadow{s}), (\shadow{s},L(s),s) \mid s,s'\in S \} \cup ~ \\
\{ (s,\tau,s') \mid s,s'\in S \land L(s)=L(s') \land s \kstrans s' \} \cup ~ \\
\{ (s,L(s'),s') \mid s,s'\in S \land L(s) \neq L(s') \land s \kstrans s' \}
\end{array}$
\end{itemize}
and application of $\ltsrev$ (applied to $T$) gives
\begin{itemize}
\item $S''= \{ s' \mid s' \in S' \land s' \ltstrans{\bot} \} = \{ s' \mid s' \in S \cup \{ \shadow{s} \mid s \in S \} \land s' \ltstrans{\bot} \}$. Since $s'\ltstrans{\bot}$ iff $s'\in S$ we obtain
    $S'' = \{ s' \mid s' \in S \} = S$.
\item $AP' = AP$.
\item $\kstrans'
\begin{array}[t]{l}
= \{ (s',t') \mid (s',a,t') \in \ltstrans{} \land a \neq \bot \land s' \ltstrans{\bot} \} \\
= \{ (s',t') \mid (s',a,t') \in \ltstrans{} \land a \neq \bot \land s' \in S \} \\
= \{ (s',t') \mid s', t' \in S \land L(s') =L(t') \land s' \kstrans t' \} \\
\cup \ \{ (s',t') \mid s', t' \in S \land L(s') \neq L(t') \land s' \kstrans t' \} \\
= \{ (s',t') \mid s', t' \in S \land s' \kstrans t' \} \\
= \kstrans
\end{array}$
\item $L'(s') = a$ where $a$ is such that $s' \ltstrans{\bot} t' \ltstrans{a}$ for some $t'$. Therefore $s'\in S$ and $t'= \shadow{s'}$. From this it follows that $a = L(s')$. So $L'(s')=L(s')$. \qed
\end{itemize}

\subsection{Proof of Lemma \ref{lem:lts_bisim}}
\label{pf:lem:lts_bisim}

Consider a Kripke structure $K = \tuple{S,AP,\kstrans,L}$ that is minimal w.r.t.\ strong bisimilarity (on \KS). We have to show that $\lts(K)$ is minimal w.r.t.\ strong bisimilarity (on \LTS). We show that (1) the identity relation on the states of $\lts(K)$ is a bisimulation relation, and (2) that this bisimulation relation is maximal.

We know, since $K$ is minimal, that the identity relation on $S$ is a maximal bisimulation relation. From this it follows that the identity relation on $S'$ (the states of $\lts(K)$) is a bisimulation relation as well.

Now assume that the identity relation on $S'$ is not the maximal bisimulation relation, i.e.,  there exists a bisimulation relation $B \subseteq S'\times S'$ that relates at least one pair of different states. First, we show that it has to be the case that at least one pair of different states from $S$ is related by $B$.

This can be seen as follows. Consider a pair of different states $s$ and $t$ related by $B$. Suppose that $s \in S$ and $t \not\in S$. In this case, by definition of $\lts$, $s \ltstrans{\bot}$, but $t \ltsntrans{\bot}$. Hence $s$ and $t$ cannot be related by a bisimulation relation. The case that $s\not\in S$ and $t \in S$ is similar.
In case both $s \not\in S$ and $t \not\in S$, by definition $s = \shadow{s'}$ and $t = \shadow{t'}$ for some $s',t'\in S$ with $s'\neq t'$. Then, by definition of $\lts$, the only transitions of $s$ and $t$ are $s \ltstrans{L(s')} s'$ and $t \ltstrans{L(t')} t'$. In order for $s$ and $t$ to be related by $B$ necessarily $s'$ and $t'$ need to be related by $B$. Thus we can safely conclude that $B$ relates a pair of different states $s$ and $t$, both from $S$.

Now we show that $B \cap (S \times S)$ is a bisimulation relation on \KS, thus contradicting the assumption that the identity relation on $S$ is the maximal bisimulation relation.

Consider a pair of different states $s$ and $t$, both from $S$, that are related by $B$. We show that $L(s) = L(t)$.
 This follows from the following observations. Both $s$ and $t$ each have a single $\bot$-transition: $s \ltstrans{\bot} \shadow{s}$ and $t \ltstrans{\bot} \shadow{t}$. Then, also $\shadow{s}$ and $\shadow{t}$ are related by $B$. These states each have only one transition: $\shadow{s} \ltstrans{L(s)} s$ and $\shadow{t} \ltstrans{L(t)} t$. From this it follows that $L(s) =L(t)$.

 Assume that $s \kstrans s'$ for some $s'\in S$. We distinguish two cases:
 \begin{itemize}
 \item $L(s) = L(s')$. Then $s \ltstrans{\tau} s'$. Then $t \ltstrans{\tau} t'$ for some $t'$ such that $(s',t') \in B$. Since $B$ cannot relate states from $S$ (such as $s'$) with states outside $S$, also $t'\in S$. Therefore, by definition of $\lts$, $t \kstrans t'$.
 \item  $L(s) \neq L(s')$. Then $s \ltstrans{L(s')} s'$. Then $t \ltstrans{L(s')} t'$ for some $t'$ such that $(s',t') \in B$. Since $B$ cannot relate states from $S$ (such as $s'$) with states outside $S$, also $t'\in S$. Then, by definition of $\lts$ it has to be the case that $L(s') = L(t')$ and $t \kstrans t'$.
 \end{itemize}
 In each case it follows that $t \kstrans t'$ and $s'$ and $t'$ are related by $B \cap (S \times S)$., which was to be shown.

The case that $t \kstrans t'$ for some $t'\in S$ needs to be mimicked is similar.

From the contradiction obtained it can be concluded that the identity relation on $S'$ is the maximal bisimulation relation.

\subsection{Proof of Lemma \ref{lem:lts_stut}}
\label{pf:lem:lts_stut}

Consider a Kripke structure $K = \tuple{S,AP,\kstrans,L}$ that is minimal w.r.t.\ stuttering equivalence (on \KS). We have to show that $\lts(K)$ is minimal w.r.t.\ divergence-sensitive branching bisimilarity (on \LTS). We show that (1) the identity relation on the states of $\lts(K)$ is a divergence-sensitive branching bisimulation relation, and (2) that this bisimulation relation is maximal.

We know, since $K$ is minimal, that $K_d$ is minimal with respect to
divergence-blind stuttering equivalence. Denote the states of $K_d$ by
$S \cup \{ s_d\}$. Hence, the identity relation on $S\cup\{s_d\}$ is a
maximal divergence-blind stuttering bisimulation relation with respect
to the Kripke structure $K_d$. From this it follows that the identity
relation on $S'$ (the states of $\lts(K)$) is a divergence-sensitive
branching bisimulation relation as well.

Now assume that the identity relation on $S'$ is not the maximal bisimulation relation, i.e.,  there exists a divergence-sensitive branching bisimulation relation $B'$ such that there are different states $s$ and $t$ from $S'$ with $(s,t) \in B'$. We distinguish four cases:
\begin{itemize}
\item $s \in S$ and $t \not\in S$. In this case, by definition of $\lts$, $s \ltstrans{\bot}$, but $t \ltsntrans{\bot}$ and $t \ltsntrans{\tau}$. Therefor the transition from $s$ cannot be mimicked from $t$. So this case cannot occur.

\item $s \not\in S$ and $t \in S$. Similar to the previous case.

\item $s \in S$ and $t \in S$. We have to show that there exists a divergence-blind stuttering bisimulation relation $B''$ with $(s,t) \in B''$.

First we consider the case that $s \kstrans s'$ for some $s'\in S \cup \{ s_d \}$. We can distinguish two cases
\begin{itemize}
\item Suppose that $s'\in S$. By definition $s \ltstrans{a} s'$. Then, by definition of divergence-sensitive branching bisimulation, we have $a=\tau$ and $(s',t) \in B'$, or the existence of $t_i$ and $t'$ such that
    \[ t \ltstrans{\tau} \cdots \ltstrans{\tau} t_i \ltstrans{\tau} \cdots \ltstrans{\tau} t_n \ltstrans{a} t'\]
     with $(s,t_i) \in B'$ (using the Stuttering Lemma) and $(s',t') \in B'$. In the first case we have $(s',t) \in B'$ and in the second case we have \[ t \kstrans \cdots \kstrans t_i \kstrans \cdots \kstrans t_n \kstrans t'\]  with $(s,t_i)\in B'$ and $(s',t') \in B'$.
\item Suppose that $s'=s_d$. By definition there is an infinite sequence
    \[ s \kstrans \cdots \kstrans s_i \kstrans \cdots \]
    of states with the same label. Therefore, in $\lts(K)$ there is an infinite sequence
    \[ s \ltstrans{\tau} \cdots \ltstrans{\tau} s_i \ltstrans{\tau} \cdots \]
    where all states have the same label $L(s) = L(s_i)$.
    Hence, there is an infinite sequence
    \[ t \ltstrans{\tau} \cdots \ltstrans{\tau} t_j \ltstrans{\tau} \cdots \ . \]
    Therefore, in $K$, there is an infinite sequence
    \[ t \kstrans \cdots \kstrans t_j \kstrans \cdots \]
    where $L(t) = L(t_j)$ for all $j$.
    Thus $t \kstrans s_d$ as required.
\end{itemize}

\item $s \not\in S$ and $t \not\in S$. By definition the only transition of $s$ is of the form $s \ltstrans{L(s')} s'$ for some $s'\in S$. Since $t \ltsntrans{\tau}$ obviously the only way to mimic the transition is by means of $t \ltstrans{L(s')} t'$ for some $t' \in S$ with $L(t') = L(s')$. Necessarily $(s',t') \in B$. We have established in the previous item that such $s'$ and $t'$ cannot be related. Therefore, also $s$ and $t$ cannot be related.
\end{itemize}

Second, we show that $L(s) = L(t)$. Since $s \in S$ we have $s \ltstrans{\bot} \shadow{s} \ltstrans{L(s)} s$. Then, $t \ltstrans{\tau^*} t^* \ltstrans{\bot} t'$ and $t'\ltstrans{\tau^*} t^{**} \ltstrans{L(s)} t''$ with $(s,t^*) \in B'$, $(\shadow{s},t') \in B'$, $(\shadow{s},t^{**})\in B'$ and $(s,t'') \in B'$.
It follows that $L(t) = L(t^*)$ and from the fact that $t'=\shadow{t}$ it follows that $L(t') = L(t)$ as well. Similarly, $L(t^{**}) = L(t')$. Since $t''=\shadow{t^{**}}$ it also follows that $L(s) = L(t'') = L(t^{**})$. Thus we have obtained $L(s)=L(t)$.

We have shown that $K_d$ was not minimal. Therefore the assumption that $\lts(K)$ is not minimal is flawed, which completes the proof.

\end{document}